\documentclass[pdftex,11pt]{article}%
\usepackage[pdftex]{hyperref}%
\usepackage[pdftex]{graphicx}%
\usepackage[normalem]{ulem}
\usepackage{amsmath,amsthm,amsfonts,
amsbsy,amssymb,upref,enumerate,bigstrut,
color,mathtools,mathrsfs,float,bm,dsfont,scalefnt}
\usepackage{lipsum}
\usepackage{stmaryrd}
\usepackage{authblk} 

\usepackage[left=1in,top=1.2in,right=0.7in]{geometry}

\def\orcid#1{\kern .08em\href{https://orcid.org/#1}{\includegraphics[keepaspectratio,width=0.7em]{parametros/orcid.pdf}}}

\theoremstyle{definition}
\newtheorem{remark1}{Remark}

\newtheorem{theorem}{Theorem}[section]

\newtheorem{corollary}[theorem]{Corollary}

\newtheorem{lemma}[theorem]{Lemma}
\newtheorem{proposition}[theorem]{Proposition}

\numberwithin{equation}{section} 

\makeatletter
\def\@seccntformat#1{\@ifundefined{#1@cntformat}%
	{\csname the#1\endcsname\quad}
	{\csname #1@cntformat\endcsname}
}
\makeatother

\markright{{\scriptsize RWprewetting-13; version from \today
}}
\newif\ifShowComments
\ShowCommentstrue
\def\strutdepth{\dp\strutbox}
\def\druk#1{\strut\vadjust{\kern-\strutdepth
        {\vtop to \strutdepth{%
                \baselineskip\strutdepth\vss
                        \llap{\hbox{#1}\quad}\null}}}}


\newcommand\scalemath[2]{\scalebox{#1}{\mbox{\ensuremath{\displaystyle #2}}}} 

\newtheorem{thm1}{Theorem}


\title{\bf
On the bimodal Gumbel model with application to environmental data
}

\author[1]{Cira E. G. Otiniano \thanks{ciragotiniano@gmail.com}}
\author[1]{Roberto Vila \thanks{rovig161@gmail.com}}
\author[1]{Pedro C.  Brom  \thanks{pcbrom@gmail.com} }
\author[2]{Marcelo Bourguignon$^*$ \thanks{m.p.bourguignon@gmail.com}}
\affil[1]{Departamento de Estat\'istica, Universidade de Bras\'ilia, 70910-900, Bras\'ilia, Brazil}
\affil[2]{Departamento de Estat\'istica, Universidade Federal do Rio Grande do Norte, 59078-970, Natal/RN, Brazil}
\setcounter{Maxaffil}{0}


\begin{document}
\maketitle

\begin{abstract}
{

The Gumbel model is a very popular statistical model due to its wide applicability for instance in the course of certain survival, environmental, financial or reliability studies.
In this work, we have introduced a bimodal generalization of the Gumbel distribution that
can be an alternative to model bimodal data.
We derive the analytical shapes of the corresponding probability density function and the
hazard rate function and provide graphical illustrations.
Furthermore, We have discussed the properties of this density such as mode, bimodality, moment generating function and moments.
Our results were verified using the Markov chain Monte Carlo simulation method.
The maximum likelihood method is used for parameters estimation.
Finally, we also carry out an application to real data that demonstrates the usefulness of the proposed distribution.
}

\end{abstract}
\smallskip
\noindent
{\small {\bfseries Keywords.} {Gumbel distribution $\cdot$ BG $\cdot$  MCMC.}}
\\
{\small{\bfseries Mathematics Subject Classification (2010).} {MSC 60E05 $\cdot$ MSC 62Exx $\cdot$ MSC 62Fxx.}}

\tableofcontents

\section{Introduction}
\noindent

Asymmetrical models for a real-valued random variable such as the Gumbel and generalized extreme value distributions have been extensively utilized for modeling various random phenomena encountered for instance in the course of certain environmental, financial or reliability studies.

Let $Y_1, Y_2, \dots, Y_n$ be a series of independent random
variables with common distribution function $F$, and
$M_n= \max\{ Y_1, Y_2, \dots, Y_n \}$. The Gumbel distribution is one of the extreme value distributions, characterized by Fisher and Tippet (1928) \cite{FT28} as the limit distribution for maxima. That is, if exist the normalization sequences $ a_n $ and $ b_n> 0 $ such that $ P[(M_n-a_n) / b_n] $ converges to a non-degenerate distribution $G$, then $G$ is an extreme value distribution. The $G$ distribution must be one of three types: Fréchet,  Gumbel or negative Weibull. Among these three distributions the Gumbel distribution (see Gumbel (1958) \cite{G58} ) is the only one with a light tail. For this reason it can be considered  as a good alternative for modeling extreme data whose tails are not heavy.

A random variable $Y$ have a Gumbel distribution with location parameter $\mu\in\mathbb{R}$ and scale paramer $\sigma>0$,  denoted by
$Y\sim F_G(\cdot;\mu,\sigma)$, if its probability density function (PDF) and cumulated distribution function (CDF) are given, respectively,  by
\begin{eqnarray}\label{pdf1}
f_G(y;\mu,\sigma)
=
\frac{1}{\sigma}\exp\biggl\{-\Big(\frac{y-\mu}{\sigma} \Big)-\exp\Big[-\Big(\frac{y-\mu}{\sigma} \Big)\Big]\biggr\}
\quad
\end{eqnarray}
and
\begin{eqnarray}\label{cdfgev1}
F_G(y;\mu,\sigma)
=
\displaystyle
\exp\biggl\{-\exp\Big[-\Big(\frac{y-\mu}{\sigma} \Big)
\Big]  \biggr\}, \quad y\in \mathbb{R}.
\end{eqnarray}


Generalizations of the Gumbel distribution have been proposed by several authors. Pinheiro and Ferrari (2015) \cite{Pinheiro15} carried out a comprehensive review of the generalization of Gumbel distribution and after comparing them they   conclude that some distributions suffer  from overparameterization. Another generalization of Gumbel is the Exponentiated Gumbel Type-2 by  Okorie (2016) \cite{Okorie2016} and the references in  \cite{Okorie2017} and \cite{Pinheiro15}.

Despite its broad applicability in many fields the Gumbel is not suitable to model
bimodal data. Furthermore, all the models cited above are not suitable for capturing this.
In this context, in this paper, we propose the bimodal Gumbel (BG)  distribution as an alternative model of extreme data with more than one mode.
Our approach consists of introduce bimodality in \eqref{pdf1} as Elal-Olivero (2010) \cite{EL10}.  Results involving bimodality
in other related probabilistic models  can be found, for example,
in	Martinez et al. (2013) \cite{MAJS13}, Çankaya et al. (2015) \cite{CBDA15};
and more recently, in Vila et al. (2020-2021) \cite{VLSSS19,VFSPO20,VC20, VHR21}.
The advantage of our model in comparison to other generalizations of the Gumbel distribution is the number of parameters and the fact that it can be used to model extreme data with one or two modes.

This paper is organized as follows. In Section \ref{Sec:2}, we define the BG distribution.
In Section \ref{Sec:3}, we provide general properties of the BG distribution including the cumulative distribution function, hazard rate function, mode, bimodality, moment generating function and moments, and stochastic representation.
In Section \ref{Sec:4}, we provide graphical illustrations.
Estimation of the parameters by maximum likelihood is investigated in Section \ref{Sec:5}.
In Section \ref{Sec:6}, we discuss an application to real data.
Some conclusions are addressed in Section \ref{Sec:7}.

\section{The bimodal Gumbel model}\label{Sec:2}
\noindent
We said that a real-valued random variable $X$ has a bimodal Gumbel (BG) distribution with parameters $ \mu\in\mathbb{R}, \sigma>0$ and $\delta\in\mathbb{R}$,
denoted by $X\sim F_{\rm BG}(\cdot;\mu,\sigma, \delta)$, if its PDF is given by
%
\begin{eqnarray}\label{pdfbgev2}
	f_{\rm BG}(x;\mu,\sigma, \delta)
	=
	{ \displaystyle
	\big[(1-\delta x)^2+1\big]
	\exp\biggl\{-\Big(\frac{x-\mu}{\sigma} \Big)-\exp\Big[-\Big(\frac{x-\mu}{\sigma} \Big)\Big]\biggr\}
	\over \textstyle
	\sigma\big[
	1+\delta^2\sigma^2{\pi^2\over 6}+(\delta\mu+\delta\sigma\gamma-1)^2
	\big]
}
,  \quad x\in \mathbb{R},
\end{eqnarray}
where $\gamma$  is the Euler's constant.

Let
\begin{eqnarray}\label{partition}
	Z_{\delta}
	=
	1+\delta^2\sigma^2{\pi^2\over 6}+(\delta\mu+\delta\sigma\gamma-1)^2
\end{eqnarray}
be the normalization constant of the BG distribution \eqref{pdfbgev2}. Using this notation, note that
$f_{\rm BG}(x;\mu,\sigma, \delta)=	{1 \over Z_{\delta}}\,[(1-\delta x)^2+1] \,  f_{\rm G}(x;\mu,\sigma)$ and that $f_{\rm BG}(x;\mu,\sigma, 0)=f_{\rm G}(x;\mu,\sigma)$. In other words, we introduce bimodality in the Gumbel distribution \eqref{pdf1} through  of a quadratic transformation $(1-\delta x)^2+1$.

The BG distribution  function \eqref{pdfbgev2} is well defined, because
\begin{eqnarray*}
	\int_{-\infty}^{\infty}f_{\rm BG}(x;\mu,\sigma, \delta) \,{\rm d}x
	&=&
	{1 \over Z_{\delta}}\,
	[1+\mathbb{E}(1-\delta Y)^2], \quad Y\sim F_G(\cdot;\mu,\sigma)
	\\[0,1cm]
	&=&
	{1 \over Z_{\delta}}\,
	\big\{1+ \delta^2\mathrm{Var}(Y)+[\delta\,\mathbb{E}(Y)-1]^2\big\}
	=1,
\end{eqnarray*}
where $\mathbb{E}(Y)=\mu+\sigma\gamma$   and
$\mathrm{Var}(Y)=\sigma^2{\pi^2\over 6}$.

\smallskip

The CDF of a BG random variable $X\sim F_{\rm BG}(\cdot;\mu,\sigma, \delta)$, defined for  $x\in\mathbb{R}$, is given by (see Subsection \ref{CDF} for more details)
%
\begin{align} \label{cdfbgev2-1}
F_{\rm BG}(x; \mu,\sigma, \delta)
=&
{\displaystyle
	\big[2-\delta\mu(2-\delta\mu)\big]
	\exp\biggl\{-\exp\Big[-\Big(\frac{x-\mu}{\sigma} \Big)\Big]\biggr\}
	+
	\delta^2\sigma^2
	I\biggl(2;\exp\Big[-\Big(\frac{x-\mu}{\sigma} \Big)\Big],+\infty\biggr)
	\over
	1+\delta^2\sigma^2{\pi^2\over 6}+(\delta\mu+\delta\sigma\gamma-1)^2
}
\nonumber
\\[0,2cm]
&
+
{\displaystyle
	2\delta(1-\delta\mu)
	\Biggl\{
	{(x-\mu)} \,
	\exp\biggl\{-\exp\Big[-\Big(\frac{x-\mu}{\sigma} \Big)\Big]\biggr\}
	-
	\sigma
	\Gamma\biggl(0,\exp\Big[-\Big(\frac{x-\mu}{\sigma} \Big)\Big]\biggr)\Biggr\}
	\over
	1+\delta^2\sigma^2{\pi^2\over 6}+(\delta\mu+\delta\sigma\gamma-1)^2
},
\end{align}
where $\Gamma(a,b)$ is the upper incomplete gamma function and
\begin{eqnarray}\label{Iab}
I(k;a,b)=(-1)^k \int_{a}^{b} \ln^k(v) \exp(-v) \, {\rm d}v,
\quad k\in\mathbb{N}\cup\{0\}, \ 0\leqslant a<b\leqslant +\infty
\end{eqnarray}
is the incomplete moments of the random variable $V\sim F_G(\cdot;0,1)$.
For $k > 1$, closed form solutions for the definite integral $I(k;a,b)$ are not available in terms of commonly used functions.

From the formula in \eqref{cdfbgev2-1},
when $\delta=0$,
$F_{\rm BG}(x; \mu,\sigma, 0)	
=
\exp\{-\exp[-(\frac{x-\mu}{\sigma} )
]\}
=F_G(x;\mu,\sigma)$, $x\in\mathbb{R}$, what was known in \eqref{cdfgev1}.

\begin{remark1}\label{eq-main}
	Let $X\sim F_{\rm BG}(\cdot;\mu,\sigma, \delta)$ and let $g(\cdot)$ be a real-valued Borel measurable function.
	From definition of expectation  and by using  the PDF of the BG distribution \eqref{pdfbgev2}, we have
	\begin{eqnarray*}
	\mathbb{E}[g(X)]
		&=&
	{1 \over Z_{\delta}}\,
	\mathbb{E}\big[g(Y)(1-\delta Y)^2+g(Y)\big]
	\\[0,1cm]
	&=&
	{1 \over Z_{\delta}}\,
	\big\{ 2\mathbb{E}[g(Y)]
	-2\delta \mathbb{E}[Yg(Y)]
	+
	\delta ^2 \mathbb{E}[Y^2g(Y)]\big\},
	\end{eqnarray*}
	where $Y\sim F_G(\cdot;\mu,\sigma)$ and $Z_\delta$ is as in \eqref{partition}.
\end{remark1}

\section{Some properties of the BG distribution} \label{Sec:3}
\noindent

In this section, some mathematical properties as closed expression for the CDF, rate, modes, bimodality, hazard function and moments of the  BG distribution are discussed.

\subsection{Cumulative distribution function}\label{CDF}
\noindent

In this subsection, we derive in detail the closed expression in \eqref{cdfbgev2-1} for the CDF of a BG random variable $X\sim F_{\rm BG}(\cdot;\mu,\sigma, \delta)$.

Indeed, from Remark \ref{eq-main} with $g(X)=\mathds{1}_{X\leqslant x}$, we obtain
\begin{eqnarray}\label{cdfbgev2}
F_{\rm BG}(x; \mu,\sigma, \delta)
=
\frac{1}{Z_{\delta}}
\big[ 2F_{\rm G}(y; \mu, \sigma)
-2\delta \mathbb{E}\left( Y \mathds{1}_{Y\leq x}\right)
+
\delta ^2 \mathbb{E}\left( Y^2 \mathds{1}_{Y\leq x}\right)\big],
\end{eqnarray}
where $Y\sim F_G(\cdot;\mu,\sigma)$ and $Z_{\delta}$ is as in \ref{partition}.

Taking the following change of variables
$
z=\exp[-({y-\mu})/{\sigma} ], \
{\rm d} z= -({z/\sigma}) {\rm d}y,
\ y=\mu-\sigma\ln(z),
$
and using a binomial expansion we have
\begin{eqnarray}\label{form-mom-trunc}
\mathbb{E}\big( Y^k \mathds{1}_{Y\leq x}\big)
&=&
\int_{\exp[-(\frac{x-\mu}{\sigma} )]}^{+\infty}
[\mu-\sigma\ln(z)]^k \exp(-z)\, {\rm d}z
\nonumber
\\[0,2cm]
&=&
\sum_{i=0}^{k} \binom{k}{i} (-1)^i \mu^{k-i}\sigma^i \int_{\exp[-(\frac{x-\mu}{\sigma} )]}^{+\infty}
\ln^i(z) \exp(-z)\, {\rm d}z
\nonumber
\\[0,2cm]
&=&
\sum_{i=0}^{k} \binom{k}{i} \mu^{k-i}\sigma^i \, I\biggl(i;\exp\Big[-\Big(\frac{x-\mu}{\sigma} \Big)\Big],+\infty\biggr),
\end{eqnarray}
where $k\in\mathbb{N}\cup\{0\}$ and $I(i;a,b)$ is as in  \eqref{Iab}. By combining \eqref{cdfbgev2} and \eqref{form-mom-trunc}, and by using the relations
\begin{eqnarray*}
I(0;a,+\infty)&=&\exp(-a),
\\[0,2cm]
I(1;a,+\infty)&=& -\exp(-a)\ln(a) -\Gamma(0,a),
\end{eqnarray*}
we get formula \eqref{cdfbgev2-1}.

%


\subsection{Stochastic representation}

Suppose $Y_{k}$ has a weighted Gumbel distribution with parameters
$\mu\in\mathbb{R}$ and $\sigma>0$. That is, if  $Y\sim F_{\rm G}(\cdot;\mu,\sigma)$, by \eqref{form-mom-trunc}, $Y_k$ has CDF given by, for each $x\in\mathbb{R}$ and $k=0,1,2,\ldots$,
\begin{align}
F_{Y_k}(x)={\mathbb{E}\big( Y^k \mathds{1}_{Y\leq x}\big)\over\mathbb{E}( Y^k)}
=
\frac{\displaystyle \sum_{i=0}^{k} \binom{k}{i} \mu^{k-i}\sigma^i \, I\biggl(i;\exp\Big[-\Big(\frac{x-\mu}{\sigma}\Big)\Big],+\infty\biggr)}
{\displaystyle \sum_{i=0}^{k} \binom{k}{i} \mu^{k-i}\sigma^i \, I(i;0,+\infty)}, \label{cdf-weighted}
\end{align}
where $I(k;a,b)$ is the incomplete moments defined in \eqref{Iab}.
Note that $F_{Y_0}(x)=F_{\rm G}(x;\mu,\sigma)$.

Let $W$ be a discrete random variable, so that $W=1$ or $W=2$ or $W=3$, each with probability
\begin{align*}
p_1={2\over Z_\delta}, \quad
p_2=-{2(\mu+\sigma\gamma)\delta \over Z_\delta},
\quad
p_3={[\sigma^2{\pi^2\over 6}+(\mu+\sigma\gamma)^2]\delta^2 \over Z_\delta },
\end{align*}
respectively, where $[\delta>0 \, \text{and} \, \mu+\sigma\gamma<0]$ or $[\delta<0 \, \text{and} \, \mu+\sigma\gamma>0]$, and $Z_\delta$ is as in \eqref{partition}. It is straightforward see that $p_1+p_2+p_3=1$.

Assume that
$$T
=
\sum_{k=1}^{3}{1\over k}\, Y_{k-1} \delta_{W,k} \delta_{W,l},
\quad l=1,2,3,
$$
and that $W$ is independent of $Y_{l}$, for each $l=1,2,3$. Here $\delta_{x,y}$ is the Kronecker delta  function, i.e., this function is 1 if the variables are equal, and 0 otherwise.

\begin{proposition}
The following holds
\begin{align}\label{claim-stochastic-rep}
X=WT \ \
\text{if and only if} \ \ X\sim F_{\rm BG}(\cdot;\mu,\sigma,\delta).
\end{align}
\end{proposition}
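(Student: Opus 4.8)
The plan is to recognize that $X=WT$ is, conditionally on the value of $W$, a rescaled weighted Gumbel variable, and that this produces a three-component mixture whose CDF can be matched term by term against the expression \eqref{cdfbgev2} for $F_{\rm BG}$. First I would simplify the definition of $T$: since $\delta_{W,k}\delta_{W,l}$ collapses the sum to the single surviving index $k=W$, the product reduces to $X=W\cdot\tfrac{1}{W}\,Y_{W-1}=Y_{W-1}$. Hence, conditioning on $W$, we have $X=Y_0$ on $\{W=1\}$, $X=Y_1$ on $\{W=2\}$, and $X=Y_2$ on $\{W=3\}$, so that, using the independence of $W$ and the $Y_l$, the variable $X$ is the finite mixture
$$F_X(x)=p_1\,F_{Y_0}(x)+p_2\,F_{Y_1}(x)+p_3\,F_{Y_2}(x).$$

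Next I would translate each mixture term back into a truncated moment of a single variable $Y\sim F_{\rm G}(\cdot;\mu,\sigma)$. From the defining relation \eqref{cdf-weighted}, $F_{Y_k}(x)=\mathbb{E}(Y^k\mathds{1}_{Y\leqslant x})/\mathbb{E}(Y^k)$, whence $\mathbb{E}(Y^k)\,F_{Y_k}(x)=\mathbb{E}(Y^k\mathds{1}_{Y\leqslant x})$. Substituting the explicit weights $p_1=2/Z_\delta$, $p_2=-2(\mu+\sigma\gamma)\delta/Z_\delta$, $p_3=[\sigma^2\pi^2/6+(\mu+\sigma\gamma)^2]\delta^2/Z_\delta$ together with $\mathbb{E}(Y)=\mu+\sigma\gamma$ and $\mathbb{E}(Y^2)=\mathrm{Var}(Y)+[\mathbb{E}(Y)]^2=\sigma^2\pi^2/6+(\mu+\sigma\gamma)^2$, each weight cancels exactly the normalizing denominator $\mathbb{E}(Y^{k-1})$ of the corresponding $F_{Y_{k-1}}$. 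Using also $F_{Y_0}=F_{\rm G}$, this turns the mixture into
$$F_X(x)=\frac{1}{Z_\delta}\big[2F_{\rm G}(x;\mu,\sigma)-2\delta\,\mathbb{E}(Y\mathds{1}_{Y\leqslant x})+\delta^2\,\mathbb{E}(Y^2\mathds{1}_{Y\leqslant x})\big],$$
which is precisely \eqref{cdfbgev2}, the CDF of $F_{\rm BG}(\cdot;\mu,\sigma,\delta)$.

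Having established $F_{WT}=F_{\rm BG}$ pointwise, both implications in \eqref{claim-stochastic-rep} follow from uniqueness of distributions: equality of CDFs yields equality in distribution, so $X=WT$ forces $X\sim F_{\rm BG}$, and conversely any $X\sim F_{\rm BG}$ shares the law of $WT$. I expect the only genuinely delicate point to be the positivity bookkeeping, since the construction is a bona fide mixture only when $p_1,p_2,p_3$ are nonnegative; this is exactly why the hypotheses $[\delta>0,\ \mu+\sigma\gamma<0]$ or $[\delta<0,\ \mu+\sigma\gamma>0]$ are imposed (they force $p_2\geqslant0$, while $p_1,p_3\geqslant0$ hold automatically). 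I would therefore verify $p_1+p_2+p_3=1$ and the nonnegativity of each $p_k$ at the outset, so that the mixture in the first step is legitimate; the remaining work is the direct term-by-term comparison described above.
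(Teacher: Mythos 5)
Your proof is correct and rests on the same core decomposition as the paper's: condition on $W$ via the law of total probability, collapse $WT$ to $Y_{W-1}$ on $\{W=l\}$, and write $F_{WT}(x)=\sum_{l=1}^{3}p_l\,F_{Y_{l-1}}(x)$. Where you genuinely diverge is in how this mixture is identified with $F_{\rm BG}$. The paper substitutes the incomplete-moment formula \eqref{cdf-weighted} for $F_{Y_1}$ and $F_{Y_2}$ and expands everything until it reproduces the long closed form \eqref{cdfbgev2-1}, a page of algebra involving $I(k;\cdot,\cdot)$ and $\Gamma(0,\cdot)$. You instead notice that the normalizing denominators $\mathbb{E}(Y)=\mu+\sigma\gamma$ and $\mathbb{E}(Y^2)=\sigma^2\pi^2/6+(\mu+\sigma\gamma)^2$ in \eqref{cdf-weighted} cancel exactly against the weights $p_2=-2\delta\,\mathbb{E}(Y)/Z_\delta$ and $p_3=\delta^2\,\mathbb{E}(Y^2)/Z_\delta$, so the mixture collapses in one line to $\tfrac{1}{Z_\delta}\bigl[2F_{\rm G}(x;\mu,\sigma)-2\delta\,\mathbb{E}(Y\mathds{1}_{Y\leqslant x})+\delta^2\,\mathbb{E}(Y^2\mathds{1}_{Y\leqslant x})\bigr]$, which is precisely \eqref{cdfbgev2}, already identified with $F_{\rm BG}$ in Subsection \ref{CDF}. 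Your route is shorter and less error-prone, since it never touches the explicit incomplete-gamma expressions; the paper's route has the side benefit of independently re-verifying the closed form \eqref{cdfbgev2-1}, but adds nothing logically. Two further points in your favor: you make explicit why the sign hypotheses on $\delta$ and $\mu+\sigma\gamma$ are imposed (they give $p_2>0$, so that $W$ is a legitimate random variable), which the paper asserts only implicitly; and your reading of the somewhat opaque definition of $T$ (the Kronecker deltas forcing $T=\tfrac{1}{l}\,Y_{l-1}$ on $\{W=l\}$) is exactly the one the paper's own proof uses, and is the only reading under which the conditioning step $\mathbb{P}(WT\leqslant x\,\vert\,W=l)=F_{Y_{l-1}}(x)$ is valid.
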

\begin{proof}
By Law of total probability and by independence, we get
\begin{align*}
\mathbb{P}(X\leqslant x)
=
\mathbb{P}(WT\leqslant x)
&=
\sum_{l=1}^{3}
\mathbb{P}(WT\leqslant x\vert W=l)\mathbb{P}(W=l)
\\[0,15cm]			
&=
\sum_{l=1}^{3}
\mathbb{P}(Y_{l-1}\leqslant x\vert W=l)\mathbb{P}(W=l)
=
\sum_{l=1}^{3}
F_{Y_{l-1}}(x)\, p_l.
\end{align*}
By using the CDF of the weighted Gumbel distribution, given in \eqref{cdf-weighted}, and by definitions of $p_l$'s and $Z_\delta$, the above expression is
\begin{align*}
&=
{2\over Z_\delta} \,  F_{Y_{0}}(x)
-
{2(\mu+\sigma\gamma)\delta \over Z_\delta}\, F_{Y_{1}}(x)
+
{[\sigma^2{\pi^2\over 6}+(\mu+\sigma\gamma)^2]\delta^2 \over Z_\delta }\, F_{Y_{2}}(x)
\\[0,15cm]
&=
{2\over Z_\delta}\,
F_{\rm G}(x;\mu,\sigma)
-
{2(\mu+\sigma\gamma)\delta \over Z_\delta}\,
\frac{\displaystyle \sum_{i=0}^{1} \binom{1}{i} \mu^{1-i}\sigma^i \, I\biggl(i;\exp\Big[-\Big(\frac{x-\mu}{\sigma}\Big)\Big],+\infty\biggr)}
{\displaystyle \sum_{i=0}^{1} \binom{1}{i} \mu^{1-i}\sigma^i \, I(i;0,+\infty)}
\\[0,15cm]
&\quad
+
{[\sigma^2{\pi^2\over 6}+(\mu+\sigma\gamma)^2]\delta^2 \over Z_\delta }\,
\frac{\displaystyle \sum_{i=0}^{2} \binom{2}{i} \mu^{2-i}\sigma^i \, I\biggl(i;\exp\Big[-\Big(\frac{x-\mu}{\sigma}\Big)\Big],+\infty\biggr)}
{\displaystyle \sum_{i=0}^{2} \binom{2}{i} \mu^{2-i}\sigma^i \, I(i;0,+\infty)}
\\[0,15cm]
&=
{\displaystyle
	\big[2-\delta\mu(2-\delta\mu)\big]
	\exp\biggl\{-\exp\Big[-\Big(\frac{x-\mu}{\sigma} \Big)\Big]\biggr\}
	+
	\delta^2\sigma^2
	I\biggl(2;\exp\Big[-\Big(\frac{x-\mu}{\sigma} \Big)\Big],+\infty\biggr)
	\over
	1+\delta^2\sigma^2{\pi^2\over 6}+(\delta\mu+\delta\sigma\gamma-1)^2
}
\nonumber
\\[0,2cm]
&\quad
+
{\displaystyle
	2\delta(1-\delta\mu)
	\Biggl\{
	{(x-\mu)} \,
	\exp\biggl\{-\exp\Big[-\Big(\frac{x-\mu}{\sigma} \Big)\Big]\biggr\}
	-
	\sigma
	\Gamma\biggl(0,\exp\Big[-\Big(\frac{x-\mu}{\sigma} \Big)\Big]\biggr)\Biggr\}
	\over
	1+\delta^2\sigma^2{\pi^2\over 6}+(\delta\mu+\delta\sigma\gamma-1)^2
}
\\[0,15cm]
&\stackrel{\eqref{cdfbgev2-1}}{=}
F_{\rm BG}(x; \mu,\sigma, \delta),
\end{align*}
where $\Gamma(a,b)$ is the upper incomplete gamma function.

Then the statement in \eqref{claim-stochastic-rep} follows.
\end{proof}


\subsection{Rate of a random variable with BG distribution}
\noindent

Following Klugman et al. (1998) \cite{Klugman1998},
for a continuous random variable $X$ with density function $f_X(x)$, the rate of a random variable is given by
\begin{eqnarray*}
\tau_X=-\lim_{x\to\infty} {{\rm d} \ln\big[f_X(x)\big]\over {\rm d}x}.
\end{eqnarray*}
A simple computation shows that
$
\tau_{{\rm BG}(\mu,\sigma, \delta)}
=
{1/\sigma}.
$

In what follows we present some comparisons between the rates of random variables with known distributions: Inverse-gamma, Log-normal, Generalized-Pareto, BWeibull (see Vila et al. 2020 \cite{VC20}), BGamma (see Vila et al. 2020 \cite{VFSPO20}), BG, exponential and Normal;
	\begin{align*}
	\tau_{{\rm InvGamma}(\alpha,\sigma)}
	=
	\tau_{{\rm LogNorm}(\mu,\kappa^2)}
	&=
	\tau_{{\rm GenPareto}(\alpha,\sigma, \zeta)}
		=
	\tau_{\rm BWeibull(\alpha<1,\sigma,\delta)}
	=0
	\\[0,1cm]
	&<
	\tau_{{\rm BG}(\mu,\sigma, \delta)}=
	\tau_{\rm BWeibull(\alpha=1,\sigma,\delta)}=
	\tau_{{\rm BGamma}(\alpha,1/\sigma,\delta)}=
	\tau_{{\rm exp}(1/\sigma)}=1/\sigma
	\\[0,1cm]
	&<
	\tau_{\rm BWeibull(\alpha>1,\sigma,\delta)}
	=
	\tau_{{\rm Normal}(\mu,\kappa^2)}=+\infty.
	\end{align*}
In other words, far enough out in the tail, every BG distribution looks like an exponential distribution.

\subsection{Bimodality}
\noindent
%
\begin{proposition}
	A point  $x\in\mathbb{R}$ is a mode of the {\rm BG} density \eqref{pdfbgev2} if it is a root of the following non-polynomial function:
	\begin{eqnarray} \label{g-def-0}	
g(x)=
{1\over \sigma}\biggl\{\exp\Big[-\Big({x-\mu\over\sigma}\Big)\Big]-1\biggl\}
-
{2\delta (1-\delta x)\over (1-\delta x)^2+1}.
	\end{eqnarray}
\end{proposition}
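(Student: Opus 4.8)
The plan is to locate the modes as the critical points of the density and to show that the critical-point equation is precisely $g(x)=0$. Since $f_{\rm BG}$ is everywhere positive and differentiable — indeed $(1-\delta x)^2+1\geqslant 1>0$ and $f_{\rm G}(x;\mu,\sigma)>0$ for all $x\in\mathbb{R}$ — the modes are exactly the points where $f_{\rm BG}'(x)=0$, and these coincide with the zeros of the logarithmic derivative $(\ln f_{\rm BG})'(x)=f_{\rm BG}'(x)/f_{\rm BG}(x)$. Working with $\ln f_{\rm BG}$ instead of $f_{\rm BG}$ turns the product structure of the density into a sum and lets the normalization constant disappear cleanly.

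First I would use the factorization $f_{\rm BG}(x;\mu,\sigma,\delta)=\frac{1}{Z_\delta}\big[(1-\delta x)^2+1\big]\,f_{\rm G}(x;\mu,\sigma)$ recorded just after \eqref{pdfbgev2}, and take logarithms:
\[
\ln f_{\rm BG}(x)=-\ln Z_\delta+\ln\big[(1-\delta x)^2+1\big]+\ln f_{\rm G}(x;\mu,\sigma).
\]
Because $Z_\delta$ in \eqref{partition} does not depend on $x$, its derivative vanishes, so the dependence on the normalization drops out at this step and plays no role in locating the modes.

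Next I would differentiate the two remaining terms. For the Gumbel factor, from \eqref{pdf1} one has $\ln f_{\rm G}(x;\mu,\sigma)=-\ln\sigma-\big(\frac{x-\mu}{\sigma}\big)-\exp[-(\frac{x-\mu}{\sigma})]$, whose derivative is $\frac{1}{\sigma}\{\exp[-(\frac{x-\mu}{\sigma})]-1\}$; for the quadratic factor, the chain rule gives $\frac{d}{dx}\ln[(1-\delta x)^2+1]=-\frac{2\delta(1-\delta x)}{(1-\delta x)^2+1}$. Adding the two yields exactly $g(x)$ as defined in \eqref{g-def-0}, so that $f_{\rm BG}'(x)=f_{\rm BG}(x)\,g(x)$; since $f_{\rm BG}(x)>0$, the critical points of the density are precisely the roots of $g$.

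There is no genuine analytic obstacle here, as the computation is elementary once the logarithmic-derivative viewpoint is adopted. The only point requiring care is conceptual: a root of $g$ is a critical point, and to certify such a point as a mode (local maximum) rather than an antimode one would inspect the sign change of $g$ — equivalently the sign of $f_{\rm BG}'$ — across the root, or examine the second derivative. This same sign analysis is what underlies the bimodality discussion that follows, since the non-polynomial $g$ may exhibit up to three sign changes, producing two maxima separated by one minimum.
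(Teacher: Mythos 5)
Your proof is correct and follows essentially the same route as the paper's: the paper likewise establishes the key relation $f'_{\rm BG}(x)=f_{\rm BG}(x)\,g(x)$ (its equation \eqref{f-der-relation}) and concludes that modes are among the roots of $g$; your logarithmic-derivative presentation is just a cleaner way of carrying out that same differentiation, with the added (and apt) remark that a root of $g$ is a priori only a critical point, the maximum/minimum classification being deferred to the sign analysis done later in the bimodality theorem.
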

\begin{proof}
Taking the derivative of $f_{\rm BG}(x;\mu,\sigma)$ with respect to $x$, we have
	\begin{eqnarray}\label{f-der-relation}	
	{f'_{\rm BG}(x;\mu,\sigma)}
	=
	f_{\rm BG}(x;\mu,\sigma)
g(x).
	\end{eqnarray}
Hence, the proof follows.
\end{proof}
%


Let $\mathcal{C}$ be the set formed for all $(\mu,\sigma,\delta)\in\mathbb{R}\times(0,+\infty)\times \mathbb{R}$ such that the following hold:
\begin{align}
&\delta
>\max\biggl\{1,
{1\over \sigma}
\Big[\exp\Big({\mu\over\sigma}\Big)-1\Big]\biggr\},
\label{cond1}
\\[0,2cm]
&{2\delta (1+\delta)\over (1+\delta)^2+1}
<
{1\over \sigma}
\Big\{\exp\Big({1+\mu\over\sigma}\Big)-1\Big\},
\label{cond2}
\\[0,2cm]\
&{2\delta (1-2\delta)\over (1-2\delta)^2+1}
<
{1\over \sigma}
\biggl\{\exp\Big[-\Big({2-\mu\over\sigma}\Big)\Big]-1\biggl\},
\label{cond3}
\\[0,2cm]
&{2\delta (1-3\delta)\over (1-3\delta)^2+1}
>
{1\over \sigma}
\biggl\{\exp\Big[-\Big({3-\mu\over\sigma}\Big)\Big]-1\biggl\}.
\label{cond4}
\end{align}
\begin{remark1} \label{rem-C}
By considering $\mu=\sigma=1$ and $\delta>{\rm e}-1$, we have that $(\mu,\sigma,\delta)\in \mathcal{C}$.
That is, the set $\mathcal{C}$ is non-empty.
\end{remark1}

\begin{lemma}\label{lemma-existence}
If $(\mu,\sigma,\delta)\in \mathcal{C}$ then the function $g(x)$ has at least three distinct real roots.
\end{lemma}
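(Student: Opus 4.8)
The plan is to prove the lemma by the Intermediate Value Theorem. First I would record that the function $g$ in \eqref{g-def-0} is continuous on all of $\mathbb{R}$: the exponential term is smooth, and in the rational term $2\delta(1-\delta x)/[(1-\delta x)^2+1]$ the denominator is bounded below by $1$, so it never vanishes. Hence it suffices to exhibit points at which $g$ takes alternating signs, and the IVT will force a root between consecutive ones. The essential observation — and, I believe, the entire reason the set $\mathcal{C}$ is defined the way it is — is that each of the four inequalities \eqref{cond1}--\eqref{cond4} is nothing more than a disguised sign statement for $g$ evaluated at one of the points $x=0,-1,2,3$.

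Concretely, substituting these values into \eqref{g-def-0}, I would verify that $g(0)=\tfrac{1}{\sigma}[\exp(\mu/\sigma)-1]-\delta$, so that the inequality $\delta>\tfrac1\sigma[\exp(\mu/\sigma)-1]$ contained in \eqref{cond1} yields $g(0)<0$; that $g(-1)=\tfrac1\sigma[\exp(\tfrac{1+\mu}\sigma)-1]-\tfrac{2\delta(1+\delta)}{(1+\delta)^2+1}$, so that \eqref{cond2} yields $g(-1)>0$; that $g(2)=\tfrac1\sigma[\exp(-\tfrac{2-\mu}\sigma)-1]-\tfrac{2\delta(1-2\delta)}{(1-2\delta)^2+1}$, so that \eqref{cond3} yields $g(2)>0$; and that $g(3)=\tfrac1\sigma[\exp(-\tfrac{3-\mu}\sigma)-1]-\tfrac{2\delta(1-3\delta)}{(1-3\delta)^2+1}$, so that \eqref{cond4} yields $g(3)<0$. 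In each case the check amounts to recognizing that the exponent $-\tfrac{x-\mu}\sigma$ and the factor $1-\delta x$ take exactly the forms appearing in the matching condition.

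With the sign pattern $g(-1)>0$, $g(0)<0$, $g(2)>0$, $g(3)<0$ established, the IVT applied on the three pairwise-disjoint open intervals $(-1,0)$, $(0,2)$ and $(2,3)$ produces a root of $g$ in each, and since the endpoints are nonzero these roots lie in the open intervals. Disjointness then guarantees the three roots are distinct, which is precisely the claim.

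The honest assessment is that there is no deep obstacle here: once one notices that \eqref{cond1}--\eqref{cond4} are sign conditions at $x=0,-1,2,3$, the argument is immediate, and the only real care required is tracking the direction of each inequality — note that \eqref{cond2} and \eqref{cond3} place the rational term on the left (giving $g>0$) whereas \eqref{cond4} reverses this (giving $g<0$). I would also remark that the component $\delta>1$ of the maximum in \eqref{cond1} is not actually needed for this IVT argument; it is presumably imposed to keep $\mathcal{C}$ consistent and non-empty (cf. Remark \ref{rem-C}) rather than to force any of the three sign changes.
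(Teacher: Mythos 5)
Your proof is correct and is essentially identical to the paper's own argument: both evaluate $g$ at the points $-1, 0, 2, 3$, read conditions \eqref{cond1}--\eqref{cond4} as the sign statements $g(-1)>0$, $g(0)<0$, $g(2)>0$, $g(3)<0$, and apply the Intermediate Value Theorem on the three disjoint intervals to obtain three distinct roots. Your added observations (explicit continuity via the denominator bound $(1-\delta x)^2+1\geqslant 1$, and the fact that the $\delta>1$ part of \eqref{cond1} is not needed here but is used later, e.g.\ in Lemma \ref{lemma-existence-1}) are accurate refinements of the same argument.
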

\begin{proof}
Since $(\mu,\sigma,\delta)\in \mathcal{C}$, a simple observation in the definition \eqref{g-def-0} of $g(x)$ shows that
\begin{itemize}
\item
$\displaystyle g(-1)
=
{1\over \sigma}
\Big\{\exp\Big({1+\mu\over\sigma}\Big)-1\Big\}
-
{2\delta (1+\delta)\over (1+\delta)^2+1}>0,
$
because of condition \eqref{cond2};
\item
$\displaystyle g(0)
={1\over\sigma} \Big\{\exp\Big({\mu\over\sigma}\Big)-1\Big\}-\delta<0$,
because of condition \eqref{cond1};
\item
$\displaystyle
g(2)
=
{1\over \sigma}
\biggl\{\exp\Big[-\Big({2-\mu\over\sigma}\Big)\Big]-1\biggl\}
-
{2\delta (1-2\delta)\over (1-2\delta)^2+1}>0
$,
because of condition \eqref{cond3};
\item
$\displaystyle
g(3)
=
{1\over \sigma}
\biggl\{\exp\Big[-\Big({3-\mu\over\sigma}\Big)\Big]-1\biggl\}
-
{2\delta (1-3\delta)\over (1-3\delta)^2+1}<0
$,
because of condition \eqref{cond4}.
\end{itemize}
Since $g(x)$ is a continuous real-valued function, by Intermediate Value Theorem, there are some points $r_1, r_2, r_3$, with $-1<r_1<0$, $0<r_2<2$ and $2<r_3<3$, so that $g(r_i)=0$, $i=1,2,3$.
\end{proof}

Let $\mathcal{D}$ be the set formed for all $x\in\mathbb{R}$ such that
\begin{eqnarray*}
2\delta^2 \,
{(1-\delta x)^2-1\over (1-\delta x)^2+1}
&<&
-{1\over \sigma^2}
\exp\Big[-\Big({x-\mu\over\sigma}\Big)\Big]
\quad \text{where} \ (\mu,\sigma,\delta)\in \mathcal{C}.
\end{eqnarray*}
\begin{remark1} \label{remark-D}
The set $\mathcal{D}$ is non-empty. To see this just take $\mu=\sigma=1$ and $\delta=2$ $(>{\rm e}-1)$. By Remark \ref{rem-C}, $(\mu,\sigma,\delta)\in \mathcal{C}$. In this case we have $\mathcal{D}=(0.132178,0.937349)$.
\end{remark1}

\begin{lemma}\label{lemma-existence-1}
If $(\mu,\sigma,\delta)\in \mathcal{C}$ then the function $g(x)$ has no root outside the interval $(-1,3)$.

Futhermore, if  $-1<r_1<0$, $0<r_2<2$ and $2<r_3<3$ are the roots of $g(x)$ found in  Lemma \ref{lemma-existence}, so that $r_2\in \mathcal{D}\subset (0,2)$, then these roots are unique.
\end{lemma}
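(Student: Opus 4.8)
The plan is to control the global sign of $g$ through the sign of its derivative. First I would differentiate \eqref{g-def-0} to obtain
\[
g'(x)=-\frac{1}{\sigma^{2}}\exp\!\Big[-\Big(\frac{x-\mu}{\sigma}\Big)\Big]-\frac{2\delta^{2}\big[(1-\delta x)^{2}-1\big]}{\big[(1-\delta x)^{2}+1\big]^{2}}.
\]
Since $(\mu,\sigma,\delta)\in\mathcal C$ forces $\delta>1$ by \eqref{cond1}, whenever $x\notin(0,2/\delta)$ we have $(1-\delta x)^{2}\geq1$, so the bracket $(1-\delta x)^{2}-1$ is nonnegative and the second summand is $\leq0$; as the first summand is strictly negative, $g'(x)<0$ there. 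Thus $g$ is strictly decreasing on $(-\infty,0]$ and on $[2/\delta,+\infty)$, and $\delta>1$ gives $2/\delta<2$, so that all possible non‑monotonicity is confined to the single bounded interval $(0,2/\delta)\subset(0,2)$.

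This already yields the first assertion. For $x\leq-1$ the monotonicity gives $g(x)\geq g(-1)>0$ — the inequality $g(-1)>0$ being exactly the computation in the proof of Lemma \ref{lemma-existence} — so $g$ has no zero on $(-\infty,-1]$; symmetrically, for $x\geq3$ we get $g(x)\leq g(3)<0$, so $g$ has no zero on $[3,+\infty)$. Hence $g$ has no root outside $(-1,3)$.

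For the uniqueness I would cap the number of zeros of $g$ at three by Rolle's theorem, for which it suffices to show that $g'$ vanishes at most twice. By the sign analysis above $g'<0$ off $(0,2/\delta)$, so I only need to bound its zeros inside $(0,2/\delta)$, where $(1-\delta x)^{2}<1$ and a zero of $g'$ amounts to
\[
\frac{1}{\sigma^{2}}\exp\!\Big[-\Big(\frac{x-\mu}{\sigma}\Big)\Big]=\frac{2\delta^{2}\big[1-(1-\delta x)^{2}\big]}{\big[(1-\delta x)^{2}+1\big]^{2}},
\]
both sides being positive there. Taking logarithms and writing $u=1-\delta x$, the left side is affine in $x$, while for the right side I would verify
\[
\frac{d^{2}}{du^{2}}\big[\ln(1-u^{2})-2\ln(1+u^{2})\big]=\frac{-2(1+u^{2})}{(1-u^{2})^{2}}+\frac{4(u^{2}-1)}{(1+u^{2})^{2}}<0\qquad(|u|<1),
\]
so that the logarithm of the right side is strictly concave in $x$. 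The difference of the two logarithms is therefore strictly convex and has at most two zeros; hence $g'$ has at most two zeros in $(0,2/\delta)$, and so at most two in all of $\mathbb R$. By Rolle this caps the zeros of $g$ at three, and Lemma \ref{lemma-existence} already supplies three distinct ones, so there are exactly three — forcing $r_1,r_2,r_3$ to be the unique roots in $(-1,0)$, $(0,2)$, $(2,3)$.

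Finally I would show how the hypothesis $r_2\in\mathcal D$ fits in. Because $(1-\delta x)^{2}+1>1$, any $x$ with $g'(x)>0$ satisfies
\[
\frac{2\delta^{2}\big[1-(1-\delta x)^{2}\big]}{(1-\delta x)^{2}+1}\geq\frac{2\delta^{2}\big[1-(1-\delta x)^{2}\big]}{\big[(1-\delta x)^{2}+1\big]^{2}}>\frac{1}{\sigma^{2}}\exp\!\Big[-\Big(\frac{x-\mu}{\sigma}\Big)\Big],
\]
i.e. $x\in\mathcal D$; thus $\{g'>0\}\subseteq\mathcal D\subset(0,2/\delta)$. Combined with the at‑most‑two‑zeros bound, the sign pattern of $g'$ is exactly $(-,+,-)$, say $g'<0$ on $(-\infty,s_1)$, $g'>0$ on $(s_1,s_2)$, $g'<0$ on $(s_2,+\infty)$ with $(s_1,s_2)=\{g'>0\}\subseteq\mathcal D$. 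Hence $g$ descends from $+\infty$ to a local minimum at $s_1$, rises across $(s_1,s_2)$ to a local maximum at $s_2$, then descends to $-1/\sigma$; the single rising crossing is exactly $r_2\in(s_1,s_2)\subseteq\mathcal D$, the sole root in $(0,2)$, while $r_1$ and $r_3$ are the sole roots on the two descending branches, so $r_2\in\mathcal D$ is precisely what places the middle root on this rising branch. The crux of the whole argument is the convexity step: the sign‑chasing and Rolle bookkeeping are routine, but it is the strict concavity of $\ln(1-u^{2})-2\ln(1+u^{2})$ on $(-1,1)$ that upgrades \emph{at least three roots} to \emph{exactly three} and thereby delivers the uniqueness.
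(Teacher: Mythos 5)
Your proof is correct, and while its first half coincides with the paper's, your uniqueness argument takes a genuinely different --- and in fact stronger --- route. Like the paper, you compute $g'$, observe it is strictly negative off $(0,2/\delta)$ (your formula, with denominator $[(1-\delta x)^2+1]^2$, is the correct one; the paper's display omits the square, harmlessly for the sign analysis), and deduce both the absence of roots outside $(-1,3)$ and the uniqueness of $r_1$ and $r_3$ from strict monotonicity together with $g(-1)>0$ and $g(3)<0$. For $r_2$, however, the paper argues directly on $\mathcal{D}$: treating $\mathcal{D}$ as the region where $g'>0$, it asserts that $g$ ``crosses the abscissa axis'' only once there, which tacitly assumes $\mathcal{D}$ is an interval on which $g$ is increasing (with the corrected derivative, $\mathcal{D}$ is only a \emph{superset} of $\{g'>0\}$, so this step is delicate), and it rules out extra roots in $(0,2)\setminus\mathcal{D}$ only implicitly. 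You instead cap the total number of zeros: the strict concavity of $\ln(1-u^2)-2\ln(1+u^2)$ on $|u|<1$ makes the log-difference strictly convex, so $g'$ has at most two zeros, and Rolle then caps the zeros of $g$ at three, which Lemma \ref{lemma-existence} saturates. This buys you two things the paper does not have: uniqueness holds for every $(\mu,\sigma,\delta)\in\mathcal{C}$ without invoking the hypothesis $r_2\in\mathcal{D}$ at all, and your closing sign-pattern analysis (showing $\{g'>0\}$ is an interval contained in $\mathcal{D}$ on which the middle crossing must occur) reveals that $r_2\in\mathcal{D}$ is automatic rather than an assumption. The price is the extra convexity computation, but that is precisely what upgrades the paper's heuristic ``crosses only once'' into a complete proof.
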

\begin{proof}
Taking the derivative of $g(x)$ with respect to $x$, we have
\begin{eqnarray*}
g'(x)
=
-{1\over \sigma^2}
\exp\Big[-\Big({x-\mu\over\sigma}\Big)\Big]
-
2\delta^2 \,
{(1-\delta x)^2-1\over (1-\delta x)^2+1}.
\end{eqnarray*}
Since, for
$x\leqslant 0$ or $x\geqslant 2/\delta$,
$
2\delta^2 \,
{[(1-\delta x)^2-1] / [(1-\delta x)^2+1]} \geqslant 0,
$
 we obtain
\begin{eqnarray*}
g'(x)
\leqslant
-{1\over \sigma^2}
\exp\Big[-\Big({x-\mu\over\sigma}\Big)\Big]<0
\quad \text{for} \
x\leqslant 0 \ \text{ou} \ x\geqslant 2/\delta.
\end{eqnarray*}
In other words, outside of the interval $(-1, 3)$ the function $g(x)$ has no zeros because this one is decreasing on $(-\infty,0]\cup[2/\delta,+\infty)$ with  $2/\delta<2$ (by condition \eqref{cond1}). This implies that the roots $-1<r_1<0$ and $2<r_3<3$, of $g(x)$,  are unique in at their respective intervals. Finally, since $r_2\in \mathcal{D}$ we have that the function $g(x)$ crosses the abscissa axis at the only point $r_2$. Therefore, we conclude that the root $r_2$ is also unique in $\mathcal{D}\subset (0,2)$.
\end{proof}

\begin{remark1}
	Let $\mu=\sigma=1$ and $\delta=2$ $(>{\rm e}-1)$. By Remark \ref{remark-D}, $\mathcal{D}=(0.132178,0.937349)$.   Computationally it can be verified that $-1<r_1= -0.0896138<0, 0<r_2= 0.389792<2$ and $2<r_3 = 2.79117<3$ are the only roots of $g(x)$, with $r_2\in \mathcal{D}$.
\end{remark1}

\begin{thm1}[Bimodality]\label{Bimodality}
If $r_2\in \mathcal{D}$ then the {\rm BG} distribution \eqref{pdfbgev2} is bimodal.
\end{thm1}
\begin{proof}
By Lemma \ref{lemma-existence-1} the function $g(x)$ in \eqref{g-def-0} has exactly three roots $r_1, r_2, r_3$ so that $r_1<r_2<r_3$.
Since $f_{\rm BG}(x;\mu,\sigma)\to 0$ as $x\to \pm\infty$, it follows that the {\rm BG} distribution \eqref{pdfbgev2} increases on the intervals $(-\infty, r_1)$
and $(r_2, r_3)$, and decreases on $(r_1, r_2)$ and $(r_3, +\infty)$. That is, $r_1$ and $r_3$ are two
maximum points and $r_2$ is the unique minimum point. Therefore, the bimodality of $f_{\rm BG}(x;\mu,\sigma)$ is guaranteed.
\end{proof}


\subsection{The hazard function}
\noindent

The survival and hazard functions, denoted by SF and HR  of the BG is given
by  $S_{\rm BG}(x;\mu,\sigma, \delta) = 1 -F_{\rm BG}(x;\mu,\sigma, \delta)$ and
$
	H_{\rm BG}(x;\mu,\sigma, \delta)
	=
	{
		f_{\rm BG}(x;\mu,\sigma, \delta)
		/
		[1-F_{\rm BG}(x;\mu,\sigma, \delta)]
	}.
$

Considering $r_1, r_2, r_3$ the three distinct real roots, of function $g(x)$, found in  Lemma \ref{lemma-existence}, we state the following result.
\begin{proposition}
If $r_2\in \mathcal{D}$ then
the HR $H_{\rm BG}(x;\mu,\sigma,\delta)$ of the {\rm BG} distribution \eqref{pdfbgev2} has the following monotonicity properties:
	\begin{itemize}
		\item[\rm 1)] It is increasing for each $x<r_1$ or $r_2<x< r_3$,
		\item[\rm 2)] It is decreasing for each $x\in\mathcal{D}$.
	\end{itemize}
\end{proposition}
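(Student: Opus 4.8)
The plan is to reduce every assertion to the sign of a single auxiliary quantity. Writing $H:=H_{\rm BG}(\cdot\,;\mu,\sigma,\delta)$, $f:=f_{\rm BG}$ and $S:=S_{\rm BG}=1-F_{\rm BG}$, I would first differentiate $H=f/S$ logarithmically. Using $S'=-f$ together with the identity $f'_{\rm BG}=f_{\rm BG}\,g$ from \eqref{f-der-relation}, this gives $(\ln H)'=f'/f-S'/S=g+H$, hence
\begin{eqnarray*}
H'(x)=H(x)\,\big[H(x)+g(x)\big].
\end{eqnarray*}
Since $H(x)>0$ for every $x$, the sign of $H'(x)$ coincides with the sign of $\Phi(x):=H(x)+g(x)$, where $g$ is the function in \eqref{g-def-0}. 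The whole proposition is therefore a sign analysis of $\Phi$.

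For part~1 I would argue directly. On the intervals $(-\infty,r_1)$ and $(r_2,r_3)$ the proof of Theorem~\ref{Bimodality} shows that $f_{\rm BG}$ is increasing, so $g=f'/f>0$ there; together with $H>0$ this forces $\Phi=H+g>0$, i.e. $H'>0$. Thus the monotonicity for $x<r_1$ and for $r_2<x<r_3$ is immediate and needs no estimate on the size of $H$.

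Part~2 is the substantive one. The first observation is that $\mathcal{D}$ is precisely $\{x:g'(x)>0\}$: rearranging the inequality defining $\mathcal{D}$ and comparing with the formula for $g'(x)$ obtained in the proof of Lemma~\ref{lemma-existence-1} shows that $x\in\mathcal{D}$ iff $-\tfrac1{\sigma^2}\exp[-(x-\mu)/\sigma]-2\delta^2\,\tfrac{(1-\delta x)^2-1}{(1-\delta x)^2+1}>0$, i.e. $g'(x)>0$. Hence on $\mathcal{D}$ the function $g$ is strictly increasing, equivalently Glaser's function $\eta:=-g$ is strictly decreasing. I would then invoke a Glaser-type mechanism: from $\Phi=H-\eta$ one gets $\Phi'=H\Phi-\eta'$, while the rate computation of Section~\ref{Sec:3} yields $H(x)\to 1/\sigma$ and $\eta(x)=-g(x)\to 1/\sigma$ as $x\to+\infty$, so $\Phi(x)\to0$ at $+\infty$. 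Propagating the sign of $\Phi$ inward from the tail, the condition $\eta'<0$ would force $\Phi<0$, that is $H+g<0$ and $H'<0$ on $\mathcal{D}$.

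I expect this propagation to be the main obstacle. The clean Glaser argument — if $\eta'<0$ on all of $(x_0,+\infty)$ and $\Phi\to0$, then $\Phi<0$ throughout — requires $\eta'$ to keep its sign out to $+\infty$, whereas $\mathcal{D}$ is only a bounded set and, by the sign of $g$ from Theorem~\ref{Bimodality}, it straddles the minimizer $r_2$ where $g$ changes sign; on the portion of $\mathcal{D}$ lying in $(r_2,r_3)$ one has $g>0$, so $\Phi=H+g>0$ there. Consequently the naive reading ``$g'>0\Rightarrow$ decreasing hazard'' cannot be purely pointwise, and the real work is to supply a quantitative bound $H(x)<-g(x)$ valid on the relevant part of $\mathcal{D}$ — presumably by estimating $H$ from above via the explicit CDF \eqref{cdfbgev2-1} on $\mathcal{D}\cap(r_1,r_2)$, where $g<0$. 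Reconciling the local condition defining $\mathcal{D}$ with the genuinely global nature of hazard-rate monotonicity is the crux of the proof.
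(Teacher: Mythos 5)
Your identity $H_{\rm BG}'=H_{\rm BG}\,(H_{\rm BG}+g)$ is correct, and your treatment of part~1 is essentially the paper's own argument: the paper notes that $f_{\rm BG}$ is increasing on $(-\infty,r_1)$ and $(r_2,r_3)$ (from the proof of Theorem~\ref{Bimodality}) and that $1/(1-F_{\rm BG})$ is increasing everywhere, so $H_{\rm BG}$ is a product of nonnegative increasing functions; your version ($g\geqslant 0$ there, hence $H_{\rm BG}+g>0$) is the same fact.

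On part~2, the obstacle you flag is real --- but it is a gap in the paper, not in your reasoning. The paper's entire proof of part~2 is: by Glaser (1980) it suffices that $G_X=-g$ be decreasing on $\mathcal{D}$, and this is ``immediate'' because $\mathcal{D}$ is where $g'>0$. That is exactly the pointwise localization you rightly refuse to perform: Glaser's theorem turns the \emph{global} shape of $\eta=-f'/f$ into the global shape of the hazard via the tail-propagation argument you sketch, which requires $\eta'<0$ on an entire ray $(x_0,+\infty)$ together with $\Phi\to 0$ at $+\infty$; it has no valid restriction to a bounded set $\mathcal{D}$ straddling $r_2$. In fact no proof can exist, because part~2 is false as stated: since $g(r_2)=0$, your identity gives $H_{\rm BG}'(r_2)=H_{\rm BG}^2(r_2)>0$, so the hazard is strictly increasing at $r_2$ and on a whole neighbourhood of it, even though $r_2\in\mathcal{D}$ by hypothesis. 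Equivalently, parts~1 and~2 contradict each other on $\mathcal{D}\cap(r_2,r_3)$, which is nonempty because $\mathcal{D}$ is open and contains $r_2$ (in the paper's own example $\mathcal{D}=(0.132178,\,0.937349)$ and $r_2=0.389792$, $r_3=2.79117$). Consequently the ``quantitative bound $H_{\rm BG}<-g$'' you anticipate as the crux cannot be supplied on all of $\mathcal{D}$: any decreasing stretch of the hazard must lie where $g<-H_{\rm BG}<0$, i.e.\ strictly inside $(r_1,r_2)$ (or beyond $r_3$), bounded away from $r_2$, so the statement itself must be weakened (e.g.\ to an increasing--decreasing--increasing shape with the decreasing piece a proper subinterval of $(r_1,r_2)$) before any argument, global or local, can succeed. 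One further caution: the formula for $g'$ in the proof of Lemma~\ref{lemma-existence-1}, which both you and the paper use to identify $\mathcal{D}$ with $\{x:g'(x)>0\}$, has an algebra slip --- the denominator of the second term should be $\big[(1-\delta x)^2+1\big]^2$ --- and with the corrected derivative one only gets $\{x:g'(x)>0\}\subseteq\mathcal{D}$, so even the premise ``$g$ is increasing on $\mathcal{D}$'' invoked by the paper is not justified.
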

\begin{proof}
As a sub-product of the proof of Theorem \ref{Bimodality}, note that the density $f_{\rm BG}(x;\mu,\sigma,\delta)$ is increasing on the intervals $x<r_1$ or $r_2<x< r_3$.
Since $1-F_{\rm BG}(x;\mu,\sigma, \delta)$ is a decreasing function, we have that the hazard function on $x<r_1$ or $r_2<x< r_3$ is the product of two increasing and nonnegative functions, then the proof of first item follows.
	
	By Glaser (1980), for to prove the second item, it is enough to prove that the function $G_X(x)$, defined as
	\begin{eqnarray*}
	G_X(x)
	=
	-{f_{\rm BG}'(x;\mu,\sigma, \delta) \over f_{\rm BG}(x;\mu,\sigma, \delta)}\stackrel{\eqref{f-der-relation}}{=}-g(x),
	\end{eqnarray*}
	is decreasing for each $x\in\mathcal{D}$.
But this is immediate since $g(x)$ is increasing on this interval.	
\end{proof}

\subsection{Moment-Generating Function}
\noindent

\begin{thm1}\label{mgf-1}
	If $X\sim F_{\rm BG}(\cdot; \mu,\sigma, \delta)$,  $t<\min\{0, -m/\sigma\}$ and $m\in\mathbb{N}\cup\{0\}$, we have
	{
	\begin{align*}
	\mathbb{E}\big[X^m\exp(tX)\big]
	&=	
	(-1)^{m+2}
	\exp(t\mu)\,
	\dfrac{
		\delta^2\sigma^{m+2} \,
		\Gamma^{(m+2)}(1-\sigma t)
		+	
		\delta\sigma^{m+1}\big[2-\delta\mu(m+2)\big] \,
		\Gamma^{(m+1)}(1-\sigma t)
	}{
		1+\delta^2\sigma^2{\pi^2\over 6}+(\delta\mu+\delta\sigma\gamma-1)^2
	}
	\\[0,5cm]
	&+
	\dfrac{	\exp(t\mu)\,
		\sum_{i=0}^{m}	
		(-1)^i
		\sigma^i\mu^{m-i}
		\big[
		2\binom{m}{i}
		-
		2\delta\mu\binom{m+1}{i}
		+
		\delta^2\mu^{2} \binom{m+2}{i}
		\big]
	\Gamma^{(i)}(1-\sigma t)
}{
		1+\delta^2\sigma^2{\pi^2\over 6}+(\delta\mu+\delta\sigma\gamma-1)^2
	},
\end{align*}
}
where $\Gamma^{(i)}(x)={{\rm d}^i \Gamma(x)/ {\rm d}x^i}$.
\end{thm1}
\begin{proof}	
	
	Taking $g(X)=X^m\exp(tX)$ in Remark \ref{eq-main},	
	\begin{align}\label{eq-fgm}
\mathbb{E}\big[X^m\exp(t\,X)\big]
=
{1 \over Z_{\delta}}\,
\big\{
2 \, \mathbb{E}\big[Y^m\exp(t\,Y)\big]
-
2\delta \, \mathbb{E}\big[Y^{m+1} \exp(t\,Y)\big]
+
\delta^2 \, \mathbb{E}\big[Y^{m+2} \exp(t\,Y)\big]
\big\},
\end{align}	
	where $ Y\sim F_{\rm G}(\cdot; 0,  \mu, \sigma)$ and    	 $Z_{\delta}$ are as in \eqref{cdfgev1} and \eqref{partition}, respectively.
	
Taking the change of variables
$
z=\exp[-({y-\mu})/{\sigma}], \
{\rm d} z= -({z/\sigma}) {\rm d}y,
\ y=\mu-\sigma\ln(z),
$
and using a binomial expansion we get, for $k=m,m+1,m+2$,
\begin{eqnarray}\label{form-mom-trunc-1}
\mathbb{E}\big[ Y^k \exp(tY)\big]
&=&
\int_{0}^{+\infty}
[\mu-\sigma\ln(z)]^k z^{(1-\sigma t)-1} \exp(-z)\, {\rm d}z, \quad k\in\mathbb{N}\cup\{0\}
\nonumber
\\[0,2cm]
&=&
\exp(\mu t)
\sum_{i=0}^{k} \binom{k}{i} (-1)^i \mu^{k-i}\sigma^i \int_{0}^{+\infty}
\ln^i(z)\, z^{(1-\sigma t)-1} \exp(-z)\, {\rm d}z.
\end{eqnarray}

\smallskip
We claim that
\begin{eqnarray}\label{claim-1}
\int_{0}^{+\infty}
\ln^i(z)\, z^{(1-\sigma t)-1} \exp(-z)\, {\rm d}z
=
\Gamma^{(i)}(1-\sigma t), \quad i=0,1,\ldots,k; \ k=m,m+1,m+2.
\end{eqnarray}
Note that, by combining \eqref{claim-1}, \eqref{form-mom-trunc-1} and \eqref{eq-fgm}, the proof of theorem follows.

Indeed, since ${\partial^j z^{\alpha-1}\over\partial \alpha^j}=\ln^j (z)\, z^{\alpha-1}$,  $j=1,2,\ldots i$,
\begin{eqnarray}\label{id-integral}
	\int_{0}^{+\infty}
	\ln^i(z)\, z^{(1-\sigma t)-1} \exp(-z)\, {\rm d}z
	=
	\int_{0}^{+\infty} 	\dfrac{\partial^j}{\partial(1-\sigma t)^j} \big[\ln^{i-j}(z)\,  z^{(1-\sigma t)-1} \exp(-z)\big]\, {\rm d}z.
\end{eqnarray}

Let us note that the following conditions hold true.
\begin{itemize}
	\item The derivatives
	\begin{eqnarray*}
		\dfrac{\partial^j}{\partial(1-\sigma t)^j}
		\big[\ln^{i-j}(z)\,  z^{(1-\sigma t)-1} \exp(-z)\big]
		=
		\ln^i(z)\, z^{(1-\sigma t)-1} \exp(-z)
	\end{eqnarray*}
	exists and are continuous in $1-\sigma t$ for all $z$ and all $1-\sigma t$, for $j=1,2,\ldots i$.
	\item
	By using the following inequalities
	\begin{eqnarray*}
	&&\vert \ln(z)\vert
	\leqslant {1\over z}\,   \mathds{1}_{\{x>0:x<1\}}(z)+  z\, \mathds{1}_{\{x>0:x\geqslant 1\}}(z),
	\\[0,2cm]
	&&	\vert \ln^i(z)\vert
	\leqslant {1\over z^{i-1}}\,
	\mathds{1}_{A_i}(z)+  z^{i-1}\, \mathds{1}_{B_i}(z), \quad i=2,3,\ldots,k,
	\end{eqnarray*}
where $A_i=\big\{x>0:x<{\rm e}^{i W({i-1\over i})/(i-1)}\big\}$, $B_i=\big\{x>0: x\geqslant {\rm e}^{i W({i-1\over i})/(i-1)}\big\}$, and $W$ is the product logarithm function (or Lambert $W$ function), we get
	\begin{align*}
		&
		\biggl\vert
		\dfrac{\partial^j}{\partial(1-\sigma t)^j}
		\big[\ln^{i-j}(z)\,  z^{(1-\sigma t)-1} \exp(-z)\big]
		\biggr\vert
		=
		\big\vert\ln^i(z)\big\vert\, z^{(1-\sigma t)-1} \exp(-z)
		\\[0,3cm]
		&\leqslant
		\begin{cases}
		z^{(1-\sigma t)-1} \exp(-z) & \text{if} \ i=0,
		\\[0,2cm]
		\big[
		z^{(1-\sigma t)-2}\,   \mathds{1}_{\{x>0:x<1\}}(z)+z^{(1-\sigma t)} \, \mathds{1}_{\{x>0:x\geqslant 1\}}(z) \big]\exp(-z) & \text{if} \ i=1,
		\\[0,2cm]
		\big[		
		z^{(1-\sigma t)-i}\,
		\mathds{1}_{A_i}(z)
		+
		z^{(1-\sigma t)+i-2}\,
		\mathds{1}_{B_i}(z)
		\big] \exp(-z)
		& \text{if} \ i=2,3,\ldots,k,
		\end{cases}
		\eqqcolon
		G(z,1-\sigma t).		
	\end{align*}
	Furthermore, the integral
	\begin{align*}
		\int_{0}^{+\infty}  G(z,1-\sigma t) \, {\rm d}z
		\leqslant
\begin{cases}
\Gamma(1-\sigma t) & \text{if} \ i=0, \ t<1/\sigma,
\\[0,2cm]
\Gamma(-\sigma t)
+\Gamma(2-\sigma t)
& \text{if} \ i=1, \ t<0,
\\[0,2cm]		
\Gamma(2-\sigma t-i)
+
\Gamma(i-\sigma t)
& \text{if} \ i=2,3,\ldots,k, \ t<(2-k)/\sigma,
\end{cases}
\end{align*}
is finite, for $k=m,m+1,m+2$.
	\item The integral
	$
	\int_{0}^{+\infty} \ln^{i-j}(z)\,  z^{(1-\sigma t)-1} \exp(-z)\, {\rm d}z
	$
	exists because of last item above.
\end{itemize}
\smallskip
Under the above three conditions, by Leibniz integral rule,  we can interchange the derivative with the integral in \eqref{id-integral}. Hence, for  $j=1,2,\ldots i$,
\begin{eqnarray*}
\int_{0}^{+\infty}
\ln^i(z)\, z^{(1-\sigma t)-1} \exp(-z)\, {\rm d}z
	=
	\dfrac{{\rm d}^j}{{\rm d}(1-\sigma t)^j}
\int_{0}^{+\infty} \ln^{i-j}(z)\,  z^{(1-\sigma t)-1} \exp(-z)\, {\rm d}z.
\end{eqnarray*}
Letting $j=i$, we obtain
\begin{eqnarray*}
\int_{0}^{+\infty}
\ln^i(z)\, z^{(1-\sigma t)-1} \exp(-z)\, {\rm d}z
&=&
	\dfrac{{\rm d}^i}{{\rm d}(1-\sigma t)^i}
\int_{0}^{+\infty}  z^{(1-\sigma t)-1} \exp(-z)\, {\rm d}z \nonumber
\\[0,15cm]
&=&
\Gamma^{(i)}(1-\sigma t), \quad i=0,1,\ldots,
\end{eqnarray*}
where in the last line we used the definition of gamma function, $\Gamma(\alpha)=\int_{0}^{+\infty}  z^{\alpha-1} \exp(-z)\, {\rm d}z$. Then the claimed \eqref{claim-1} follows.

Thus, this complete the proof of theorem.
\end{proof}

By taking $m=0$ in Theorem \ref{mgf-1} we obtain a clossed expression for the moment-generating function of BGumbel distribution.
\begin{corollary}\label{mgf}
	If $X\sim F_{\rm BG}(\cdot; \mu,\sigma, \delta)$  then, the moment-generating function $M_X(t)=\mathbb{E}[\exp(tX)]$ with $t<0$, is given by
	\begin{eqnarray*}
		M_X(t)
		=
		{
			\exp(\mu t) \Gamma(1-\sigma t)
			\big[
			2-2\mu\delta+\mu^2\delta^2+2\sigma\delta(1-\mu\delta) \psi(1-\sigma t)+
			{\sigma^2\delta^2\over \Gamma(1-\sigma t)} \,
			\Gamma^{(2)}(1-\sigma t)
			\big]
			\over
			1+\delta^2\sigma^2{\pi^2\over 6}+(\delta\mu+\delta\sigma\gamma-1)^2	
		},
	\end{eqnarray*}
	where $\psi(x)=\Gamma'(x)/ \Gamma(x)$ is the digamma function and $\Gamma^{(2)}(1-\sigma t)={{\rm d}^2\Gamma(x)/ {\rm d} x^2}$.
\end{corollary}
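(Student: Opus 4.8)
The plan is to specialize Theorem~\ref{mgf-1} to $m=0$ and then reorganize the resulting three terms into the stated single-line formula. Setting $m=0$, the hypothesis $t<\min\{0,-m/\sigma\}$ collapses to $t<0$, which is exactly the domain claimed for $M_X(t)$, and the prefactor $(-1)^{m+2}=(-1)^{2}=1$. Recalling that $M_X(t)=\mathbb{E}[\exp(tX)]=\mathbb{E}[X^{0}\exp(tX)]$, the corollary is literally the $m=0$ instance of the theorem, so no new analytic input is required; the work is purely the evaluation of the closed form at $m=0$.

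First I would write out the two fractions appearing in Theorem~\ref{mgf-1} at $m=0$. In the first fraction $\sigma^{m+2}=\sigma^{2}$, $\sigma^{m+1}=\sigma$ and $2-\delta\mu(m+2)=2-2\delta\mu$, so its numerator becomes $\exp(t\mu)\big[\delta^{2}\sigma^{2}\Gamma^{(2)}(1-\sigma t)+\delta\sigma(2-2\delta\mu)\Gamma^{(1)}(1-\sigma t)\big]$. In the second fraction the sum $\sum_{i=0}^{m}$ reduces to its single $i=0$ term; using $\binom{0}{0}=\binom{1}{0}=\binom{2}{0}=1$ and $\Gamma^{(0)}=\Gamma$, this contributes $\exp(t\mu)\big[2-2\delta\mu+\delta^{2}\mu^{2}\big]\Gamma(1-\sigma t)$. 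Both fractions share the common denominator $Z_\delta=1+\delta^{2}\sigma^{2}\frac{\pi^{2}}{6}+(\delta\mu+\delta\sigma\gamma-1)^{2}$ from \eqref{partition}, so I would add the numerators over this single denominator.

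The last step is cosmetic: I would factor $\Gamma(1-\sigma t)$ out of the combined numerator, rewrite $\Gamma^{(1)}(1-\sigma t)/\Gamma(1-\sigma t)=\psi(1-\sigma t)$ for the digamma function, and collect $\delta\sigma(2-2\delta\mu)=2\sigma\delta(1-\mu\delta)$. This produces exactly the bracketed expression $2-2\mu\delta+\mu^{2}\delta^{2}+2\sigma\delta(1-\mu\delta)\psi(1-\sigma t)+\frac{\sigma^{2}\delta^{2}}{\Gamma(1-\sigma t)}\Gamma^{(2)}(1-\sigma t)$ multiplied by $\exp(\mu t)\Gamma(1-\sigma t)/Z_\delta$, which is the claimed $M_X(t)$. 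Since the derivation is a direct substitution, there is no genuine obstacle; the only points demanding care are the sign check $(-1)^{m+2}=1$ and the bookkeeping in factoring out $\Gamma(1-\sigma t)$ and recognizing the digamma quotient, together with the sanity check that the two fractions share the denominator $Z_\delta$ so that they merge cleanly.
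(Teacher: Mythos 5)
Your proposal is correct and is exactly the paper's own route: the paper obtains Corollary \ref{mgf} by setting $m=0$ in Theorem \ref{mgf-1}, just as you do, and your bookkeeping (the sign $(-1)^{m+2}=1$, the collapse of the sum to its $i=0$ term giving $[2-2\delta\mu+\delta^{2}\mu^{2}]\Gamma(1-\sigma t)$, the common denominator $Z_\delta$, and the rewriting $\Gamma^{(1)}/\Gamma=\psi$) is accurate. Nothing is missing.
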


\begin{remark1}
	Letting $\delta=0$ in Corollary \ref{mgf} we obtain the known formula $M_Y(t)=\exp(\mu t) \Gamma(1-\sigma t)$, where $Y\stackrel{d}{=}X\sim F_{\rm G}(\cdot; 0,  \mu, \sigma)$. Note that, by a  by-product of proof of Theorem \ref{mgf-1}, in this case it is sufficient take $t<1/\sigma$ instead $t<0$.
\end{remark1}
\begin{remark1}
The characteristic function of $X\sim F_{\rm BG}(\cdot; \mu,\sigma, \delta)$, denoted by $\phi_X(t)$, can be obtained from the moment-generating function by the relation $M_X(t) =\phi_X(-it)$.
\end{remark1}

\subsection{Moments}
\noindent

\begin{thm1}\label{car-moments}
	If $X\sim F_{\rm BG}(\cdot; \mu,\sigma, \delta)$  then
	\begin{eqnarray*}
	\mathbb{E}(X^k)
	=
	{\displaystyle
		\delta^2\sigma^{k+2}I(k+2;0,+\infty)
		-	
		\delta\sigma^{k+1}\big[2-\delta\mu(k+2)\big]I(k+1;0,+\infty)
		\over
		1+\delta^2\sigma^2{\pi^2\over 6}+(\delta\mu+\delta\sigma\gamma-1)^2
	}
	\\[0,2cm]
	+
	{
		\sum_{i=0}^{k}	
		\sigma^i\mu^{k-i}
		\big[
		2\binom{k}{i}
		-
		2\delta\mu\binom{k+1}{i}
		+
		\delta^2\mu^{2} \binom{k+2}{i}
		\big] I(i;0,+\infty)
		\over
		1+\delta^2\sigma^2{\pi^2\over 6}+(\delta\mu+\delta\sigma\gamma-1)^2
	},
	\end{eqnarray*}
	where $I(i;a,b)$ is defined in  \eqref{Iab}.
\end{thm1}
\begin{proof}
	From Remark \ref{eq-main} with $g(X)=X^k$, we obtain	
	\begin{eqnarray}\label{eq-1}
	\mathbb{E}(X^k)
	&=&
	{1 \over Z_{\delta}}\,
	\big[
	2 \, \mathbb{E}(Y^{k})
	-
	2\delta \, \mathbb{E}(Y^{k+1})
	+
	\delta^2 \, \mathbb{E}(Y^{k+2})
	\big],
	\end{eqnarray}
	where $ Y\sim F_{\rm G}(\cdot; 0,  \mu, \sigma)$ is as in \eqref{cdfgev1}.
	
	By taking $x\longrightarrow +\infty$ in \eqref{form-mom-trunc}, Lebesgue dominated convergence theorem gives
	\begin{eqnarray}\label{eq-2}
	\mathbb{E}(Y^k)
	=
	\lim_{x\to +\infty}
	\mathbb{E}\big( Y^k \mathds{1}_{Y\leq x}\big) 
	=
	\sum_{i=0}^{k} \binom{k}{i} \mu^{k-i}\sigma^i \, I(i;0,+\infty).
	\end{eqnarray}
	
	By combining \eqref{eq-1} and \eqref{eq-2}, the proof follows.
\end{proof}

\begin{corollary}
	If  $X\sim F_{\rm BG}(\cdot; \mu,\sigma,\delta)$  then
	\begin{align}\label{mean}
		\mathbb{E}(X)
	&=
	{
\delta^2\sigma^3\big[2\zeta(3) + \gamma^3 + {\gamma \pi^2\over 2}\big]
-
\delta\sigma^2(2-3\delta\mu)\big(\gamma^2+{\pi^2\over 6}\big)
+
\mu\big[2-\delta\mu(2-\delta\mu)\big]
+
\sigma\big[2-\delta\mu(4-3\delta\mu)\big]\gamma
		\over
		1+\delta^2\sigma^2{\pi^2\over 6}
		+
		(\delta\mu+\delta\sigma\gamma-1)^2
	},
\\[0,2cm]
\label{2moment}
	\mathbb{E}(X^2)
	&=
	{
     \delta^2\sigma^4\big[8\gamma \zeta(3) + \gamma^4 + \gamma^2 \pi^2 + {3 \pi^4\over 20}\big]
     -
     2\delta\sigma^3(1-2\delta\mu)\big[2\zeta(3) + \gamma^3 + {\gamma \pi^2\over 2}\big]
     +
     \mu^2\big[2-\delta\mu(2-\delta\mu)]
		\over
		1+\delta^2\sigma^2{\pi^2\over 6}+(\delta\mu+\delta\sigma\gamma-1)^2
	}
	\nonumber\\
	&+
	{	
     2\sigma\mu\big[2-\delta\mu(3-2\delta\mu)\big]\gamma
+
2\sigma^2\big[1-3\delta\mu(1-\delta\mu)\big] \big(\gamma^2+{\pi^2\over 6}\big)
		\over
		1+\delta^2\sigma^2{\pi^2\over 6}+(\delta\mu+\delta\sigma\gamma-1)^2
	},
\\[0,2cm]
	\mathbb{E}(X^3)
	&=
{
\delta^2\sigma^5\big[20\gamma^2\zeta(3) + {10 \pi^2 \zeta(3)\over 3} + 24\zeta(5) + \gamma^5 + {5\gamma^3 \pi^2\over 3} + {3 \gamma \pi^4\over 4}\big]	 
\over
	1+\delta^2\sigma^2{\pi^2\over 6}+(\delta\mu+\delta\sigma\gamma-1)^2
}
\nonumber	\\
&+
{	
\mu^3 \big[2-\delta\mu(2-\delta\mu)\big]
+
\sigma\mu^2\big[6-\delta\mu(8-5\delta\mu)\big]\gamma
+
2\sigma^2\mu\big[3-\delta\mu(6-5\delta\mu)\big] (\gamma^2+{\pi^2\over 6})
	\over
	1+\delta^2\sigma^2{\pi^2\over 6}+(\delta\mu+\delta\sigma\gamma-1)^2
}
\nonumber	\\
&+
{	
2\sigma^3\big[1-\delta\mu(4-5\delta\mu)\big]\big[2\zeta(3) + \gamma^3 + {\gamma \pi^2\over 2}\big]
-
\delta\sigma^4(2-5\delta\mu)\big[8\gamma \zeta(3) + \gamma^4 + \gamma^2 \pi^2 + {3 \pi^4\over 20}\big]
	\over
	1+\delta^2\sigma^2{\pi^2\over 6}+(\delta\mu+\delta\sigma\gamma-1)^2
},	\nonumber	
	\end{align}
	where
	$\gamma$  is the Euler-Mascheroni constant and $\zeta(s)$ is the
	Riemann zeta function.
\end{corollary}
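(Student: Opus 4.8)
The plan is to specialize Theorem~\ref{car-moments} to $k=1,2,3$ and to evaluate explicitly the incomplete moments $I(i;0,+\infty)$ that appear, for $i=0,1,\dots,k+2$. Writing out the general formula of Theorem~\ref{car-moments} for each fixed $k$ is purely mechanical; the content of the corollary lies entirely in producing closed forms for the five constants $I(1;0,+\infty),\dots,I(5;0,+\infty)$ and then collecting the resulting monomials in $\mu,\sigma,\delta$.

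First I would record the identity
\[
I(i;0,+\infty)
=(-1)^i\int_{0}^{+\infty}\ln^i(v)\exp(-v)\,{\rm d}v
=(-1)^i\,\Gamma^{(i)}(1),
\]
which is immediate from the definition \eqref{Iab} together with the relation $\Gamma^{(i)}(\alpha)=\int_{0}^{+\infty}\ln^i(v)\,v^{\alpha-1}\exp(-v)\,{\rm d}v$ evaluated at $\alpha=1$; this is exactly \eqref{claim-1} with $\sigma t=0$, already justified inside the proof of Theorem~\ref{mgf-1}. Thus the whole problem reduces to evaluating the first five derivatives of $\Gamma$ at the point $1$.

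Next I would compute these derivatives. Since $\int_{0}^{+\infty}(-\ln z)^i\exp(-z)\,{\rm d}z=(-1)^i\Gamma^{(i)}(1)=I(i;0,+\infty)$, the required constants are precisely the raw moments of the standard Gumbel law $V\sim F_G(\cdot;0,1)$, whose moment-generating function is $\Gamma(1-t)$ (the $\delta=0$, $\mu=0$, $\sigma=1$ case of Corollary~\ref{mgf}). Reading off the cumulants from $\ln\Gamma(1-t)=\gamma t+\sum_{n\geqslant 2}\frac{\zeta(n)}{n}\,t^n$, namely $\kappa_1=\gamma$ and $\kappa_n=(n-1)!\,\zeta(n)$ for $n\geqslant 2$, and applying the standard moment--cumulant relations gives
\[
I(1;0,+\infty)=\gamma,\qquad
I(2;0,+\infty)=\gamma^2+\tfrac{\pi^2}{6},\qquad
I(3;0,+\infty)=\gamma^3+\tfrac{\gamma\pi^2}{2}+2\zeta(3),
\]
together with the analogous closed forms for $I(4;0,+\infty)$ and $I(5;0,+\infty)$, obtained using $\zeta(2)=\pi^2/6$ and $\zeta(4)=\pi^4/90$; these are the constants appearing in the $\delta^2\sigma^{k+2}$ coefficients of the claimed formulas.

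Finally I would substitute these values into the specialization of Theorem~\ref{car-moments} for $k=1,2,3$ and regroup. The main obstacle is not conceptual but the algebraic bookkeeping in the case $k=3$: one must expand the three binomial sums (now running up to $i=5$), insert the explicit values of $\Gamma^{(i)}(1)$, and reorganize everything so that the coefficients of $\delta^2\sigma^5$, $\delta\sigma^4$, $\sigma^3$, $\sigma^2\mu$, and the lower-order terms line up with the compact expression stated for $\mathbb{E}(X^3)$; tracking the $\zeta(3),\zeta(5)$ and $\pi^4$ contributions without sign errors is the delicate part. The common denominator $1+\delta^2\sigma^2\frac{\pi^2}{6}+(\delta\mu+\delta\sigma\gamma-1)^2$ is carried along unchanged throughout.
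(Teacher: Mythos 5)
Your proposal is correct and takes essentially the same route as the paper: specialize Theorem~\ref{car-moments} to $k=1,2,3$ and substitute the closed-form values of $I(i;0,+\infty)$, $i=0,\dots,5$, listed in \eqref{improper-int}. The only difference is that you additionally justify those constants via $I(i;0,+\infty)=(-1)^i\Gamma^{(i)}(1)$ and the cumulant expansion $\ln\Gamma(1-t)=\gamma t+\sum_{n\geqslant 2}\zeta(n)\,t^n/n$ of the standard Gumbel law, a derivation the paper omits (it merely states the values).
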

\begin{proof} By combining Theorem \ref{car-moments} with the following values for the
improper integral $I(k;0,+\infty)$ in \eqref{Iab}, with $k =0,1,2,3,4$ and $5$,
	\begin{eqnarray}\label{improper-int}
		\begin{array}{lllll}
		I(0;0,+\infty)&=&1,
		\\[0,1cm]
		I(1;0,+\infty)&=&\gamma,
		\\[0,1cm]
		I(2;0,+\infty)&=&\gamma^2+{\pi^2\over 6},
		\\[0,1cm]
		I(3;0,+\infty)&=&2\zeta(3) + \gamma^3 + {\gamma \pi^2\over 2},
		\\[0,1cm]
		I(4;0,+\infty)&=&8\gamma \zeta(3) + \gamma^4 + \gamma^2 \pi^2 + {3 \pi^4\over 20},
		\\[0,1cm]
		I(5;0,+\infty)&=& 20\gamma^2\zeta(3) + {10 \pi^2 \zeta(3)\over 3} + 24\zeta(5) + \gamma^5 + {5\gamma^3 \pi^2\over 3} + {3 \gamma \pi^4\over 4},
		\end{array}
	\end{eqnarray}
the proof follows.	
\end{proof}

For the variance of the BGumbel distribution
\begin{equation}\label{variance}
    {\rm Var}(X)=\mathbb{E}(X^2)-\big[\mathbb{E}(X)\big]^2,
\end{equation}
just replace (\ref{mean}) and (\ref{2moment}) in (\ref{variance}).

\begin{remark1}[Standardized moments]
Let $X\sim F_{\rm BG}(\cdot; \mu,\sigma,\delta)$ with $\mathbb{E}(X)$ and $\sqrt{{\rm Var}(X)}>0$, both finite.
Newton's binomial expansion gives
\begin{eqnarray}\label{exp-bin}
	\mathbb{E}\biggl[{X-\mathbb{E}(X)\over \sqrt{{\rm Var}(X)}}\, \biggr]^n
	=
	{(-1)^{n}\big[\mathbb{E}(X)\big]^{n} \over \big[{\rm Var}(X) \big]^{n/2} }\, \sum_{k=0}^{n}\binom{n}{k} (-1)^{-k}\, \big[\mathbb{E}(X)\big]^{-k}\, \mathbb{E}(X^k).
\end{eqnarray}
By combining \eqref{exp-bin}, Theorem \ref{car-moments}, identities in \eqref{improper-int} and the following relation
\begin{eqnarray*}
\begin{array}{lllll}
I(6;0,+\infty) = 20 \gamma (2 \gamma ^2 + \pi^2) \zeta(3) + 40 [\zeta(3)]^2 + 144 \gamma \zeta(5) + \gamma ^6 + {5 \gamma ^4 \pi^2\over 2} + {9 \gamma ^2 \pi^4\over 4} + {61 \pi^6\over 168},
\end{array}
\end{eqnarray*}
closed formulas for
skewness and  kurtosis of the random variable $X\sim F_{\rm BG}(\cdot; \mu,\sigma,\delta)$  it is possible to obtain.
\end{remark1}

\section{Graphical illustrations and simulation results} \label{Sec:4}
\noindent

\subsection{Graphical illustration}
\noindent

Here, the flexibility of the BG model is shown. Note that the  BG model can be unimodal or bimodal.
Figures \ref{F1}  and \ref{F2} show how the BG density function is influenced by the shape parameters $\delta$ and $\mu$, since $ \sigma $ is a scale parameter.
Figure \ref{F1} shows that the bimodality of the BG model is more evident the greater the absolute value of $ \delta $. The density tends to become unimodal when the values of $ \delta $ and $ \mu $ are far apart. In the bimodal case, the left mode is smaller and the opposite for the right mode as $ \mu $ increases, see Figure \ref{F2}. All graphics were generated with our bgumbel package, \cite{Brom}.
\begin{figure}[!htbp]
	\centering
	\includegraphics[width=0.73\linewidth]{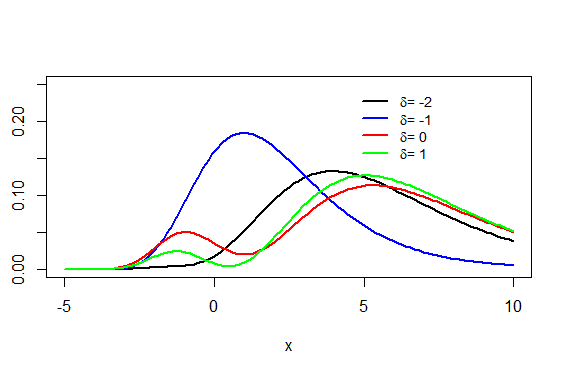}
	\vspace{-0.8cm}
	\caption{BG density graphs for
		$\mu=1$, $\sigma=2$ and $\delta$ varying
		as shown in the caption.
	}
	\label{F1}
\end{figure}

\begin{figure}[!htbp]
	\centering
	\includegraphics[width=0.73\linewidth]{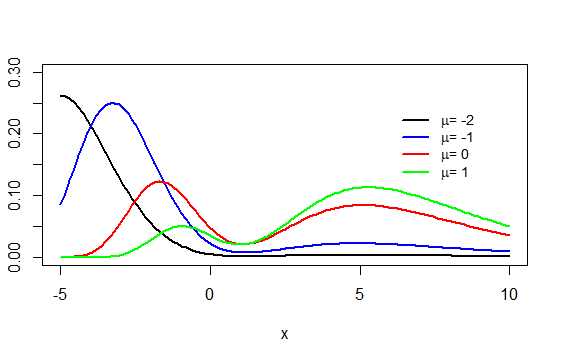}
	\vspace{-0.8cm}
	\caption{BG density graphs for
		$\delta=1$, $\sigma=2$ and $\mu$ varying
		as shown in the caption.}
	\label{F2}
\end{figure}


\subsection{Simulation results}
\noindent

In this section, we compare the sample mean and  variance  with  the  population mean and variance  (\ref{mean}) and (\ref{variance}). In addition,   we compared the cumulative distribution function given in (\ref{cdfbgev2-1}) with the empirical distribution. For that, we present a Markov chain Monte Carlo (MCMC) simulation study with various sample sizes.
In the MCMC simulation method we considered the Metropolis-Hastings  method  as pseudo-random numbers (PRN) generator \cite{metropolis},\cite{hastings}.
We chose this method instead of the inverse transform method, because the  BG cumulative distribution (\ref{CDF}) does not admit analytical inversion.
To generate  PRN  was used the MCMCpack package \cite{MCMCpack}  available in the R program \cite{R}.
The mean, variance  and the corresponding bias of the sample estimates  were computed over $10^5$ iterations of MCMC. These results are shown in
Tables \ref{comparativemoments}.
The performance of the MCMC algorithm was tested by varying the number of iterations, n.
Graphically, the convergence of the Markov Chain is shown in Figure  \ref{setparameters1}, for the first set of parameters in Table \ref{comparativemoments}.
For the other parameters sets , in the Table \ref{comparativemoments},
the performance of the algorithm is illustrated in  Appendix A, Figures \ref{figa1} - \ref{figa9}.  The value of n varies from 1000 to 100000 for each parameters set.
In all these figures the transition probability density function (TPDF) ,defined by the MCMCpack package, is in blue and the estacionary density in red.
We can conclude that the algorithm, \cite{Brom}, is efficient for n greater than
100000.
\begin{table}[!htbp]
	\centering
	\caption{ Comparison between the sample moments and population moments ($n = 10^5$).}
	\begin{tabular}{cccrrrccc}
		\hline
		$\mu$ & $\sigma$ & $\delta$ & sample mean &   $E(X)$ & Bias(mean) &  variance sample & $Var(X)$ & Bias (variance)   \\
		\hline
		$-$2 & 1 & $-$1 & $-$1.0931 & $-$1.0640 & $-$0.0291 & 4.9268 & 5.0126 & $-$0.0858
		\\
		$-$1 & 2 & $-$1 & 3.9504 & 4.0170 & $-$0.0666 & 17.126 & 18.016 & $-$0.8902
		\\
		$-$1 & 2 & $-$2 & 3.9901 & 3.9909 & $-$0.0008 & 21.032 & 21.575 & $-$0.5435
		\\
		$-$2 & 2 & $-$1 & 2.0065 & 1.9512 & 0.0553 & 24.104 & 24.592 & $-$0.4883
		\\
		\hline
	\end{tabular}
	\label{comparativemoments}
\end{table}

\begin{figure}[!htbp]
	\centering
	\includegraphics[width=0.73\linewidth]{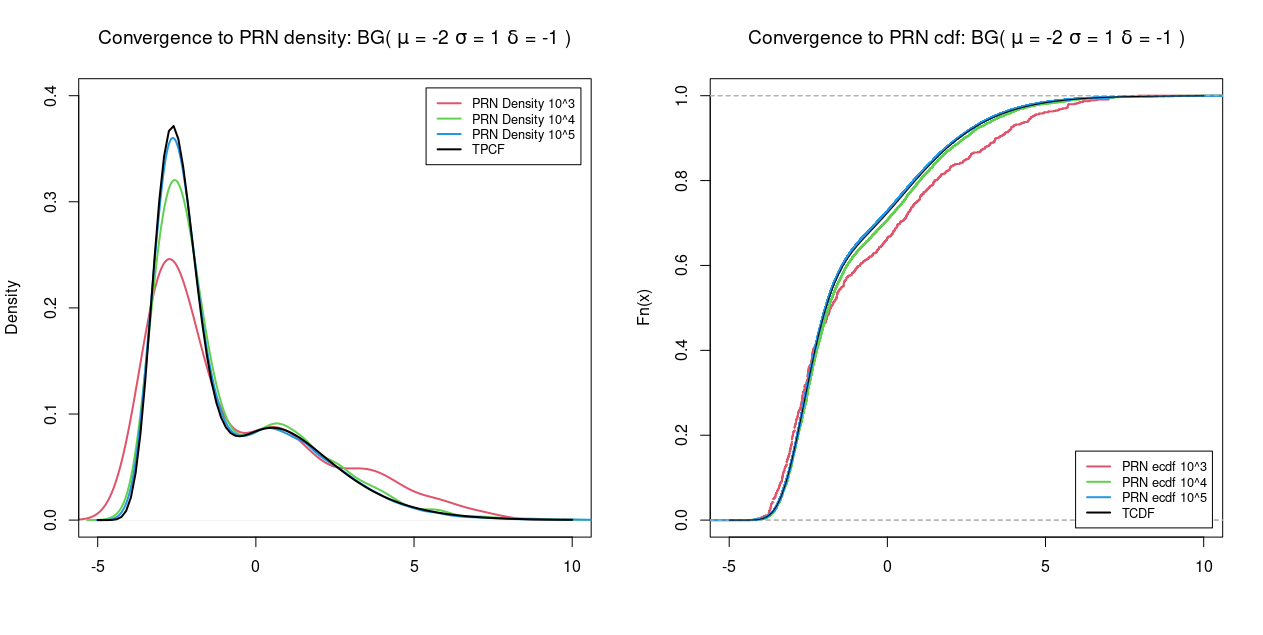}
	\vspace{-1cm}
	\caption{  Transition probability density function (TPDF)(left) and  transition cumulative distribution function (TCDF)(right), varying $n$ from $10^3$ to $10^5$.}
	\label{setparameters1}
\end{figure}

\section{Maximum likelihood estimation}
\label{Sec:5}
\noindent

In this section, we determine the maximum likelihood estimates (MLEs) of the parameters of the BG distribution.
Let $X\sim F_{\rm BG}(\cdot; \mu,\sigma,\delta)$ be a random variable with PDF $f _{\rm BG} (x; \Theta) $ defined in \eqref{pdfbgev2}, where $\Theta=(\mu,\sigma,\delta)$. Let $ X_1, \dots, X_n $ be a random sample of $X$ and
$x= (x_1, \dots, x_n) $ the corresponding observed values of the random sample $ X_1, \dots, X_n $.
The log-likelihood function  for $\Theta=(\mu, \sigma, \delta)$ is given by
\begin{align}\label{fv2}
\! \ell(\Theta;x)
=
-n\ln Z_\delta
-
n\ln \sigma
+
\sum_{i=1}^{n}\left\{
\ln\big[(1-\delta x_i)^2+1\big]
-
\Big(\frac{x_i-\mu}{\sigma}\Big)- \exp\Big[-\Big(\frac{x_i-\mu}{\sigma} \Big)\Big] \right\}.
\end{align}

The first-order partial derivatives of $ Z_{\delta} $ are given by
\begin{align*}
& \textstyle
\frac{\partial Z_{\delta}}{\partial \mu}
= \textstyle
2\delta\big[\delta(\mu+\sigma\gamma)-1\big];
\\
& \textstyle
\frac{\partial Z_{\delta}}{\partial \sigma}
=
\frac{\sigma\delta \pi^2}{3}
+
2\delta\gamma\big[\delta(\mu+\sigma\gamma)-1\big];
\\
& \textstyle
\frac{\partial Z_{\delta}}{\partial \delta}
=
\frac{\sigma\delta \pi^2}{3}
+
2\big[\delta(\mu+\sigma\gamma)-1\big](\mu+\sigma\gamma).
\end{align*}
Then, the mle of $\mu, \sigma, \delta$ are the solutions of the following system of equations
\begin{align*}
\frac{\partial \ell(\Theta;x)}{\partial \mu}&= -\frac{n}{Z_{\delta}}\frac{\partial Z_{\delta}}{\partial \mu}+  \frac{n}{\sigma} -\frac{1}{\sigma}\sum_{i=1}^{n}
\exp\Big[-\Big(\frac{x_i-\mu}{\sigma} \Big)\Big]=0;
\\
\frac{\partial \ell(\Theta;x)}{\partial \sigma}
&=
-\frac{n}{Z_{\delta}}\frac{\partial Z_{\delta}}{\partial \sigma}
-
\frac{n}{\sigma}
-
\frac{1}{\sigma^2}\sum_{i=1}^{n}
(x_i - \mu)
\left\{1-\exp\Big[-\Big(\frac{x_i-\mu}{\sigma} \Big)\Big] \right\}=0;
\\
\frac{\partial \ell(\Theta;x)}{\partial \delta}
&=
-\frac{n}{Z_{\delta}}
\frac{\partial Z_{\delta}}{\partial \delta}
-2
\sum_{i=1}^{n}
{x_i
	(1-\delta x_i)\, \over
	(1-\delta x_i)^2 +1}
=0.
\end{align*}

The MLEs  $\widehat\mu, \widehat\sigma$ and $\widehat\delta$ of $\mu, \sigma$ and $\delta$
are defined as the values of $\widehat\mu, \widehat\sigma$ and $\widehat\delta$ that maximize the log-likelihood
function in (\ref{fv2}). There will be, in general, no closed form for the MLE
and their obtention will need, in practice, numerical methods.

Since the second order partial derivatives of $ Z_{\delta} $ are
\begin{align*}
\begin{array}{lllll}
&
\frac{\partial^2 Z_{\delta}}{\partial \mu^2}
= 2\delta^2;
&
\frac{\partial^2 Z_{\delta}}{\partial \sigma^2}
= \frac{\delta \pi^2}{3}+2\delta^2\gamma^2;
\\[0,2cm]	
&
\frac{\partial^2 Z_{\delta}}{\partial \delta^2}
= \frac{\sigma \pi^2}{3}+2(\mu+\sigma\gamma)^2;
&
\frac{\partial^2 Z_{\delta}}{\partial \mu \partial\sigma}
=
\frac{\partial^2 Z_{\delta}}{\partial\sigma \partial \mu}
=
2\delta^2\gamma;
\\[0,2cm]	
&
\frac{\partial^2 Z_{\delta}}{\partial \mu \partial\delta}
=
\frac{\partial^2 Z_{\delta}}{\partial\delta \partial \mu}
=
4\delta(\mu+\sigma\gamma);
&
\frac{\partial^2 Z_{\delta}}{\partial \sigma \partial \delta}
=
\frac{\partial^2 Z_{\delta}}{\partial \delta \partial \sigma}	
=
\frac{\sigma \pi^2}{3}+ 4\delta\gamma(\mu+\sigma\gamma);
\end{array}	
\end{align*}
the second-order partial derivatives of the log-likelihood function  $\ell(\Theta;x)$ are given by
\begin{align*}
\frac{\partial^2 \ell(\Theta;x)}{\partial \mu^2}
&=
-D_{\mu,\mu}(\Theta;n)
-
\frac{1}{\sigma^2}
\sum_{i=1}^{n} \exp\Big[-\Big(\frac{x_i-\mu}{\sigma} \Big)\Big];
\\
\frac{\partial^2 \ell(\Theta;x)}{\partial \sigma^2}
&=
-D_{\sigma,\sigma}(\Theta;n)
+
\frac{n}{\sigma^2}
+
\frac{1}{\sigma^3}
\sum_{i=1}^{n}
(x_i - \mu)
\left\{
2- \Big[2-\Big(\frac{x_i-\mu}{\sigma}\Big) \Big]  \exp\Big[-\Big(\frac{x_i-\mu}{\sigma} \Big)\Big]
\right\}
;
\\
\frac{\partial^2 \ell(\Theta;x)}{\partial \delta^2}
&=
-D_{\delta,\delta}(\Theta;n)
-2
\sum_{i=1}^{n}	
{x_i^2 \big[(1-\delta x_i)^2 -1\big] \over
	\big[(1-\delta x_i)^2 +1\big]^2}.
\end{align*}
Already, the second-order mixed derivatives of $\ell(\Theta;x)$ can be written as
\begin{align*}
\frac{\partial^2 \ell(\Theta;x)}{\partial \mu \partial \sigma}
&=
\frac{\partial^2 \ell(\Theta;x)}{\partial \sigma \partial \mu}
=
-D_{\mu,\sigma}(\Theta;n)
-
\frac{n}{\sigma^2}
+
\frac{1}{\sigma^2}
\sum_{i=1}^{n} \Big[1-\Big(\frac{x_i-\mu}{\sigma}\Big) \Big]  \exp\Big[-\Big(\frac{x_i-\mu}{\sigma} \Big)\Big];
\\
\frac{\partial^2 \ell(\Theta;x)}{\partial \mu \partial \delta}
&=
\frac{\partial^2 \ell(\Theta;x)}{\partial \delta \partial \mu }
=
-D_{\mu,\delta}(\Theta;n);
\\
\frac{\partial^2 \ell(\Theta;x)}{\partial\sigma \partial \delta}
&=
\frac{\partial^2 \ell(\Theta;x)}{ \partial \delta \partial\sigma}
=
-D_{\sigma,\delta}(\Theta;n),
\end{align*}
where
\begin{align}\label{D-def}
\textstyle
D_{u,v}(\Theta;n)=\frac{\partial}{\partial u} \big(\frac{n}{Z_{\delta}}\frac{\partial Z_{\delta}}{\partial v} \big)
=
\frac{n}{Z_{\delta}}
\big(
\frac{\partial^2 Z_{\delta}}{\partial u\partial v}
-
{1\over Z_{\delta}}\frac{\partial Z_{\delta}}{\partial u} \frac{\partial Z_{\delta}}{\partial v}
\big), \quad u,v\in\{\mu,\sigma,  \delta\}.
\end{align}

Thus, the elements of the Fisher information matrix $I_X(\Theta)$ are defined by
\begin{eqnarray*}
	[I_X(\Theta)]_{jk}=\mathbb{E}\biggl[{\partial \ln f_{\rm BG}({X};{\Theta})\over \partial\theta_j}\ {\partial \ln f_{\rm BG}({X};{\Theta})\over \partial\theta_k}\biggr],
\end{eqnarray*}
for $\theta_j,\theta_k\in\{\mu,\sigma,  \delta\}$ and $j,k=1,2,3$.  Under known regularity conditions,   the elements $[I_X(\Theta)]_{jk}$ are written as
$\mathbb{E}\big[-{\partial^2 \ln f_{\rm BG}({X};{\Theta})\over \partial\theta_j\partial\theta_k}\big]$.
Hence, by considering the following notations:
\begin{eqnarray*}
	\begin{array}{lllll}
		F_1(X)&=& \exp\big[{-\big(\frac{X-\mu}{\sigma}\big)}\big],
		\\[0,3cm]
		F_2(X)&=&
		\big[1-\big(\frac{X-\mu}{\sigma}\big)\big]
		\exp\big[{-\big(\frac{X-\mu}{\sigma}\big)}\big],
		\\[0,3cm]
		F_3(X)&=&
		(X - \mu)
		\big\{
		2-\big[2-\big(\frac{X-\mu}{\sigma}\big) \big]
		\exp\big[{-\big(\frac{X-\mu}{\sigma}\big)}\big]
		\big\},
		\\[0,3cm]
		F_4(X)&=&
		{X^2 \big[(1-\delta X)^2 -1\big] /
			\big[(1-\delta X)^2 +1\big]^2},
	\end{array}
\end{eqnarray*}
we get
\begin{align*}
I_X(\Theta)=
\scalemath{0.95}{
	\begin{bmatrix}
	D_{\mu,\mu}(\Theta;1) + \frac{1}{\sigma^2} \mathbb{E}[F_1(X)]
	&
	D_{\mu,\sigma}(\Theta;1)
	+
	\frac{1}{\sigma^2}
	-
	\frac{1}{\sigma^2} \mathbb{E}[F_2(X)]
	&
	D_{\mu,\delta}(\Theta;1)
	\\[0,3cm]
	D_{\mu,\sigma}(\Theta;1)
	+
	\frac{1}{\sigma^2}
	-
	\frac{1}{\sigma^2} \mathbb{E}[F_2(X)]
	&
	D_{\sigma,\sigma}(\Theta;1)
	-
	\frac{1}{\sigma^2}
	-
	\frac{1}{\sigma^3} \mathbb{E}[F_3(X)]
	&
	D_{\sigma,\delta}(\Theta;1)
	\\[0,3cm]
	D_{\mu,\delta}(\Theta;1)
	&
	D_{\sigma,\delta}(\Theta;1)
	&
	D_{\delta,\delta}(\Theta;1)
	+
	2\mathbb{E}[F_4(X)] \
	\end{bmatrix}
},
\end{align*}
where 	$D_{u,v}(\Theta;n)$ is as in\eqref{D-def}.
We emphasize that, by using Theorem \ref{mgf-1} and Corollary \ref{mgf}, closed expressions for the expectations $\mathbb{E}[F_1(X)]$, $\mathbb{E}[F_2(X)]$ and $\mathbb{E}[F_3(X)]$ can be found. Furthermore, by using Theorem \ref{car-moments}, a simple argument shows that $F_4(X)$ is an integrable random variable.

\section{Real-world data analysis} \label{Sec:6}
\noindent

In this section, we used the BG model to fit one data set coming from  daily  pressure observations of the state of Mato Grosso do Sul-Brazil,
during the period of the year 2015 to 2020. The data series employed were taken from Corumbá automatic station of INMET (Instituto Nacional de Meteorologia) at https://portal.inmet.gov.br/dadoshistoricos. The required numerical evaluations are
implemented using the R software \cite{R}, \cite{hastings}, and \cite{MCMCpack}. The maximum likelihood estimates of the parameters were obtained  as in Section \ref{Sec:5}, using the bgumbel package \cite{Brom}.

For a random sample of the pressure $\{X_i\}_{t=1}^{T}$, $T=1774$, we fit their maximum values by
Gumbel and BG models.
For this, we used the maximum block technique. That is,
we provide $\tau=29$ non-overlapping subsamples of length $N=60$; $\{m_i\}_{i=1}^{\tau}$, where $m_i=\max\{x_{(i-1)N+1} \dots x_{i N + 1}\}$, \ $\forall i=1, \ldots, \tau=[T/N]$ ($[.]$ denotes the integer  part ). We applied Ljung – Box’s test \cite{LBox} for verifying the null hypothesis of serial independence of the new sample of  maxima.
The test statistics did not reject the null hypothesis at the significance level of $1,7\%$ for the $28$ blocks of size $N=60$ and  the last of size $34$.

The descriptive statistics of  sample $\{m_i\}_{i=1}^{\tau}$ and the sample centered by the  mean are shown in Table \ref{descriptive}.
\begin{table}[!htbp]
	\centering
	\caption{ Descriptive statistics}
	\begin{tabular}{cccccc}
		\hline
		Distribution & Mean & Median & Maximum & Minimum & Std Dev \\
		
		\hline
		
		Extreme values & 1009.2 & 1010.0 & 1017.9 & 1001.1 & 4.9301 \\
		
		Centralized extreme values & 0.0000 & 0.7893 & 8.7393 & $-$8.0607 & 4.9301 \\
		\hline
	\end{tabular}
	\label{descriptive}
\end{table}
The parameter estimates and their corresponding standard error estimates (SE) from the BG and Gumbel models are shown in Table \ref{estimates}.
The results of goodness-of-fit tests based on the Kolmogorov–Smirnov (KS) test, Akaike information criterion (AIC) and Bayesian information criterion (BIC)  indicate that the BG model is a better fit, judging on the basis of $p$-values and  the lowest value of the AIC and BIC, see Table \ref{estimates}. The Figures \ref{fitdata} and \ref{qqdata} also indicates that the adjustment of the data by the BG model is better than the simple Gumbel model.

\begin{table}[!htbp]
	\centering
	\caption{Parameters estimates of Gumbel and BG models.}
	\begin{tabular}{ccrcccc}
		\hline
		
		Model & Parameters & Estimate & SE & ks $p$-value & AIC & BIC\\
		
		\hline
		
		& $\mu$ & $-$3.6972 & 0.4613 & \\
		
		BG & $\sigma$ & 2.4661 & 0.2222 & 0.8239 & 169.8228 & 173.8194\\
		
		& $\delta$ & $-$0.4128 & 0.1220 & \\ \\
		
		Gumbel & $\mu$ & $-$2.4080 & 0.8667 & 0.6576 & 173.2287 & 175.8932\\
		
		& $\beta$ & 4.3326 & 0.6405 & 		
		\\
		\hline
	\end{tabular}
		\label{estimates}
\end{table}

\begin{figure}[!htbp]
	\centering
	\includegraphics[width=0.8\linewidth]{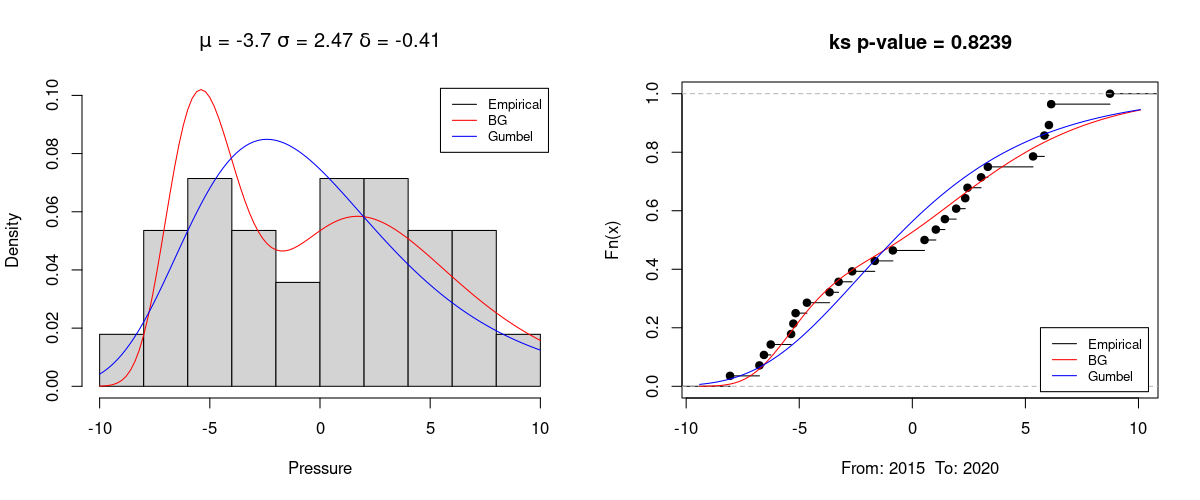}
	\caption{On the left: centralized extreme values histogram versus fitted BG and Gumbel models. On the right: empirical cdf versus theorical cdf of BG and Gumbel models.}
	\label{fitdata}
\end{figure}

\begin{figure}[!htbp]
	\centering
	\includegraphics[width=0.8\linewidth]{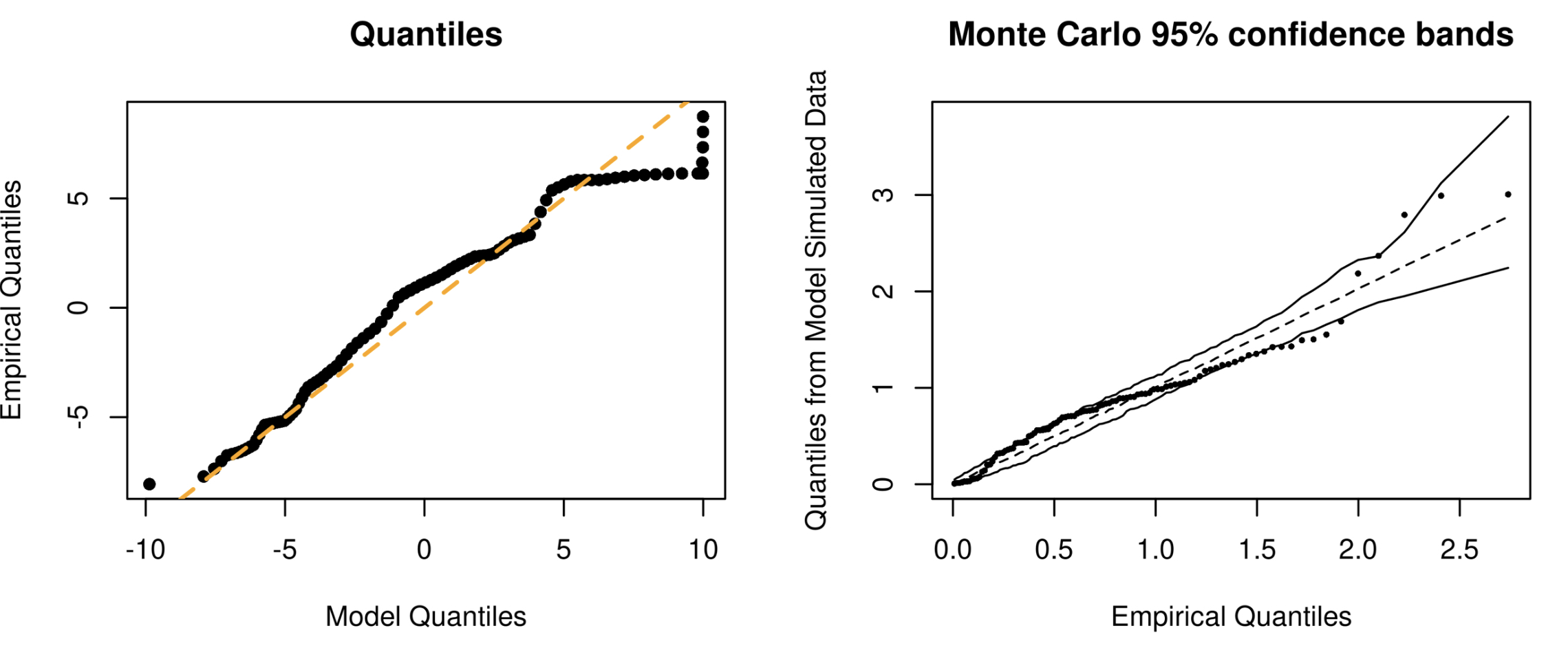}
	\caption{  QQ-plot  empirical versus  theoretical of  BG model. }
	\label{qqdata}
\end{figure}

\section{Conclusions} \label{Sec:7}
\noindent

In this paper, we proposed an extension to the Gumbel distribution  by using a quadratic transformation technique used to generate bimodal functions
produced due to using the quadratic expression, with an additional bimodality parameter
which modifies the mode of the distribution, composing as a alternative models for single
maxima events. In this generalization, the Gumbel distribution appears as a particular case.
We provide a mathematical treatment of the new
model including the mode, bimodality, moment generating function and moments.
We performed
the modelling under a frequentist approach and the estimation of the parameters was proposed
using the maximum likelihood estimation.
Finally, we have performed a statistical modeling with real data by using
the new proposed model in the article.
The application demonstrated the practical relevance of the new model, which also showed the advantage of Gumbel model.

\paragraph{Acknowledgements}
This study was financed in part by the Coordenação de Aperfeiçoamento de Pessoal de Nível Superior - Brasil (CAPES) (Finance Code 001).

\paragraph{Disclosure statement}
There are no conflicts of interest to disclose.



\newpage

\section*{Appendix}

\begin{figure}[h!]
	\centering
	\includegraphics[width=0.8\linewidth]{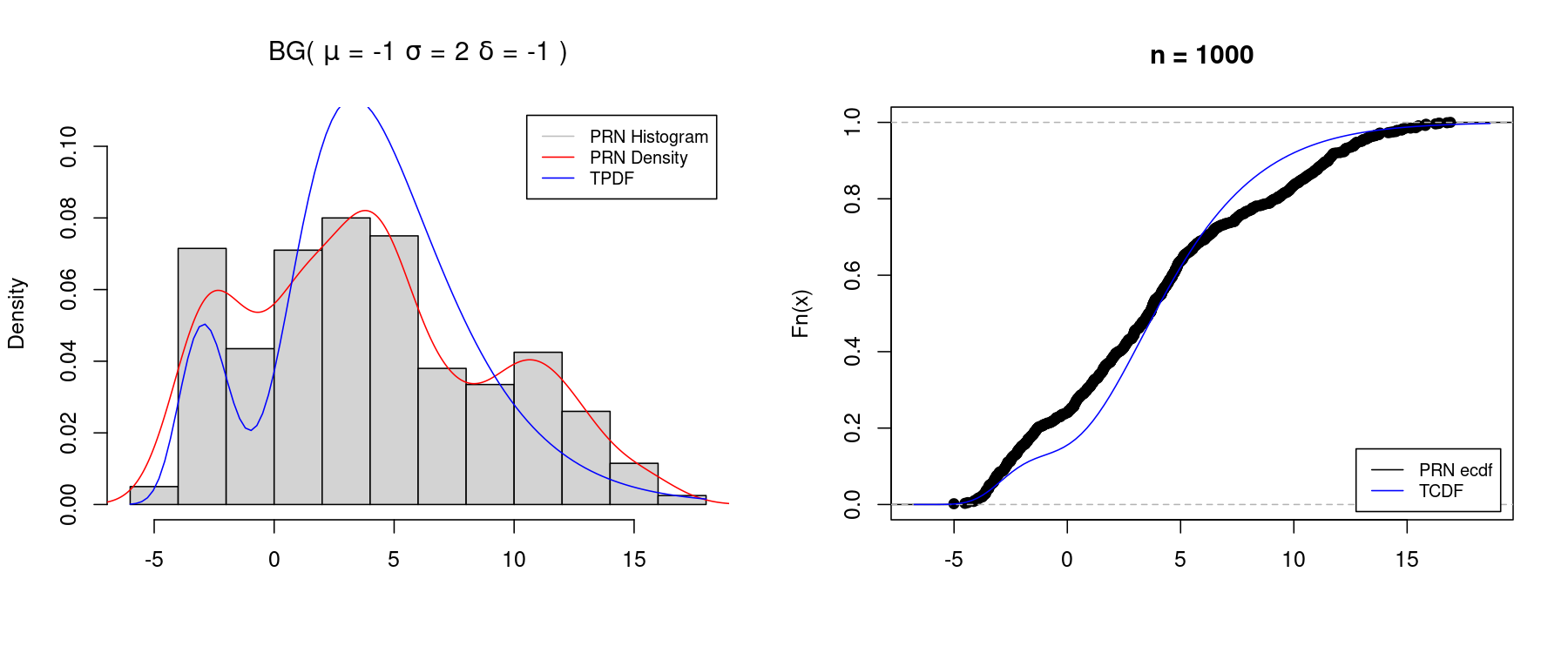}
	\vspace{-0.5cm}
	\caption{TPDF (left) and  TCDF (right),  n = 1000.}
	\label{figa1}
\end{figure}

\begin{figure}[h!]
	\centering
	\includegraphics[width=0.8\linewidth]{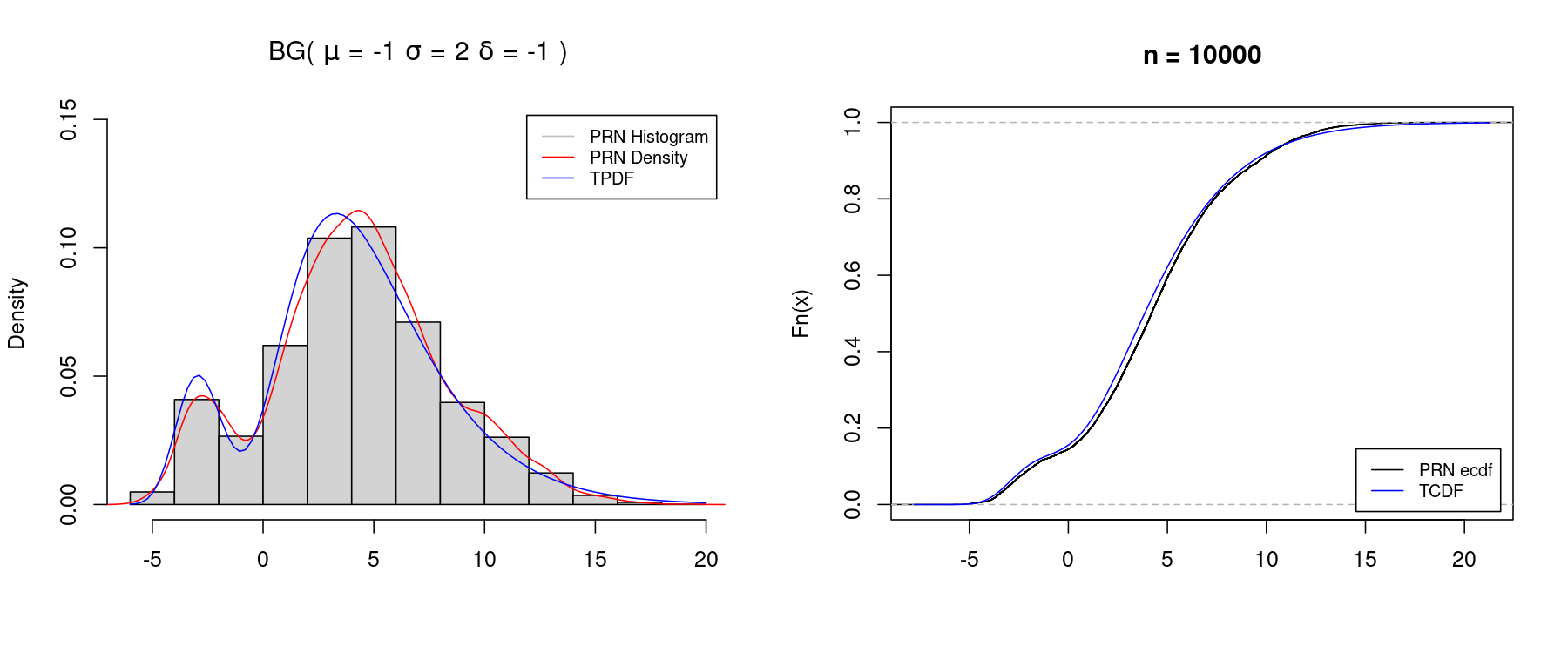}
	\vspace{-0.5cm}
	\caption{TPDF (left) and  TCDF (right),   n = 10000.}
\end{figure}

\begin{figure}[h!]
	\centering
	\includegraphics[width=0.8\linewidth]{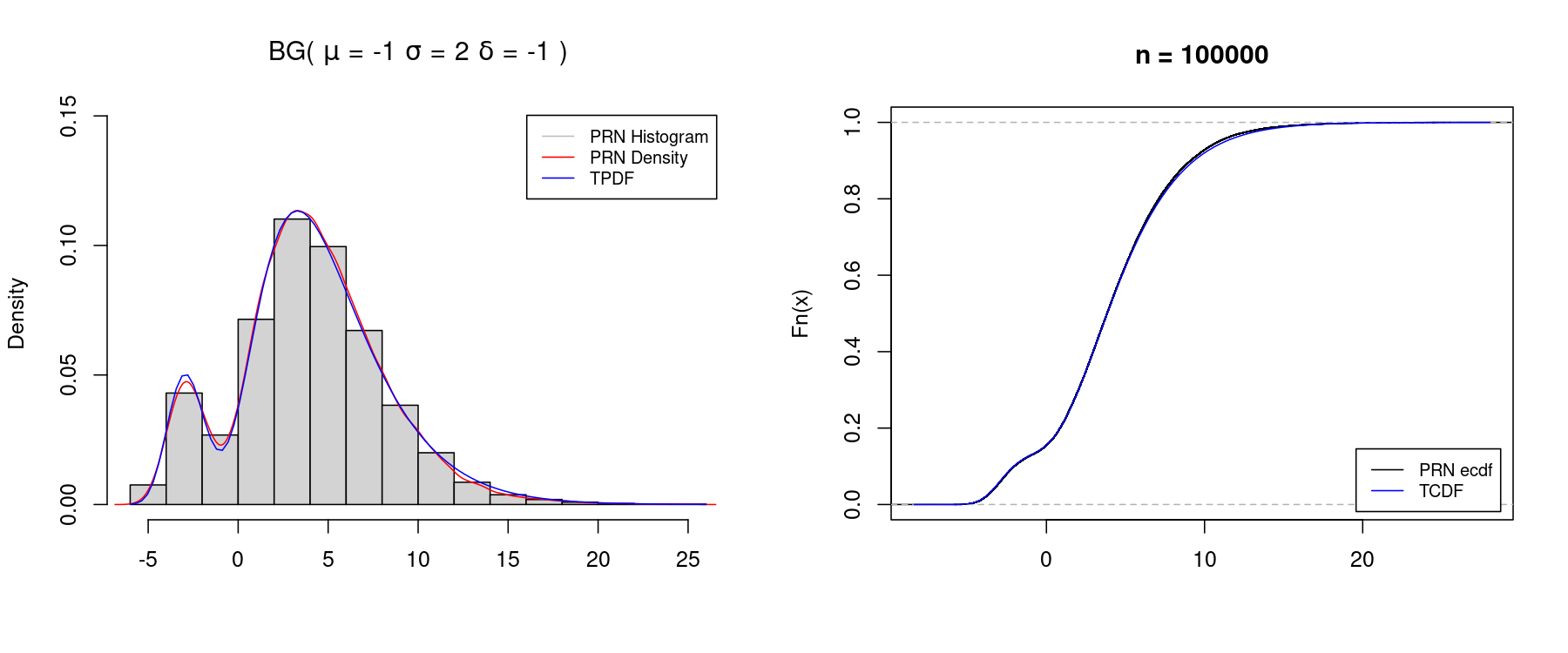}
	\vspace{-0.5cm}
	\caption{TPDF (left) and  TCDF (right),  n = 100000.}
\end{figure}


\begin{figure}[h!]
	\centering
	\includegraphics[width=0.8\linewidth]{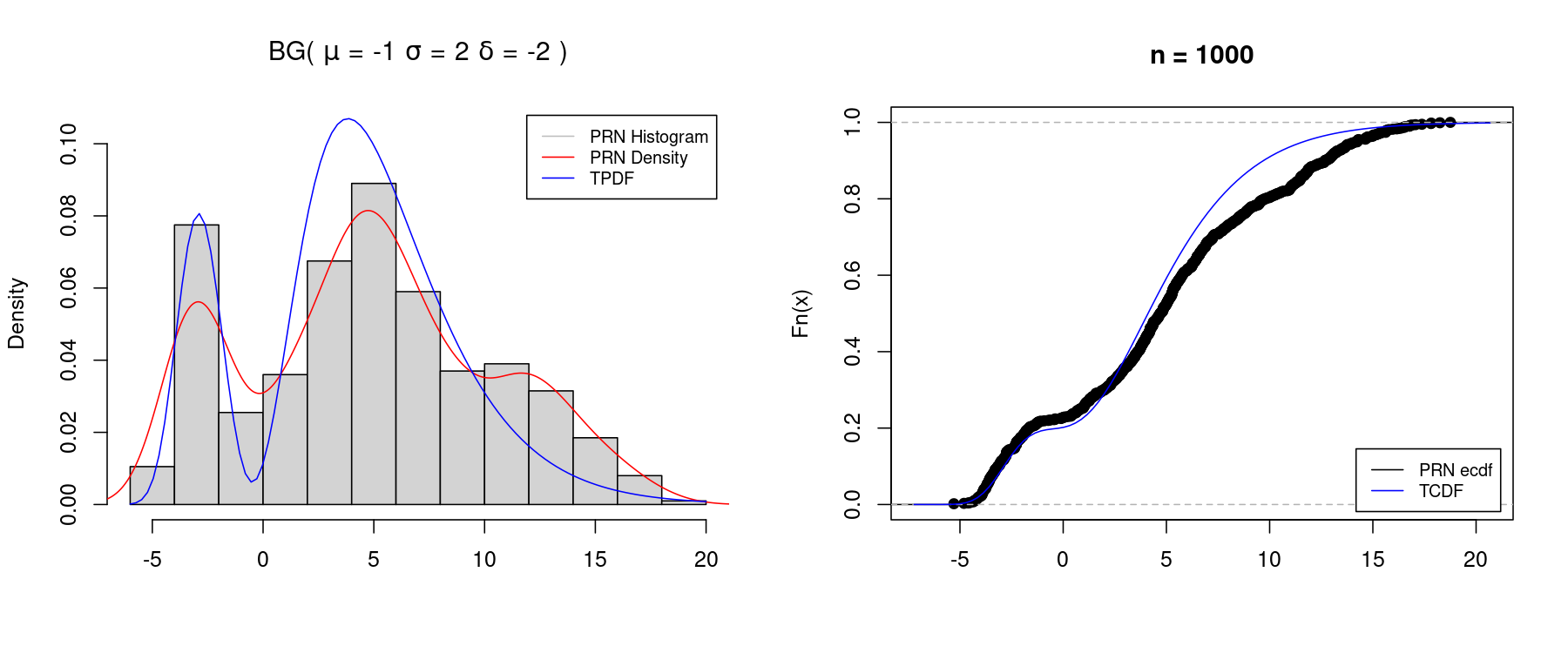}
	\vspace{-0.5cm}
	\caption{TPDF (left) and  TCDF (right),  n = 1000.}
\end{figure}

\begin{figure}[h!]
	\centering
	\includegraphics[width=0.8\linewidth]{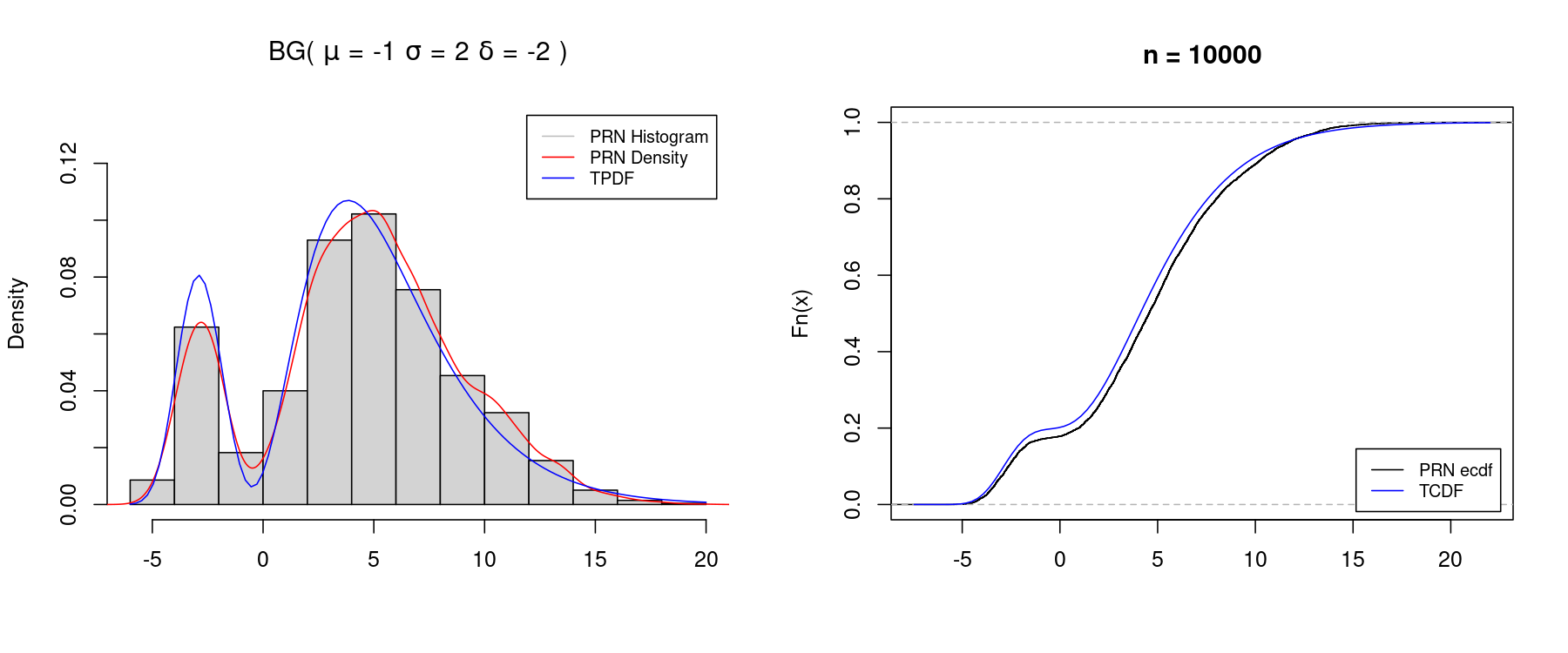}
	\vspace{-0.5cm}
	\caption{TPDF (left) and  TCDF (right),  n = 10000.}
\end{figure}

\begin{figure}[h!]
	\centering
	\includegraphics[width=0.8\linewidth]{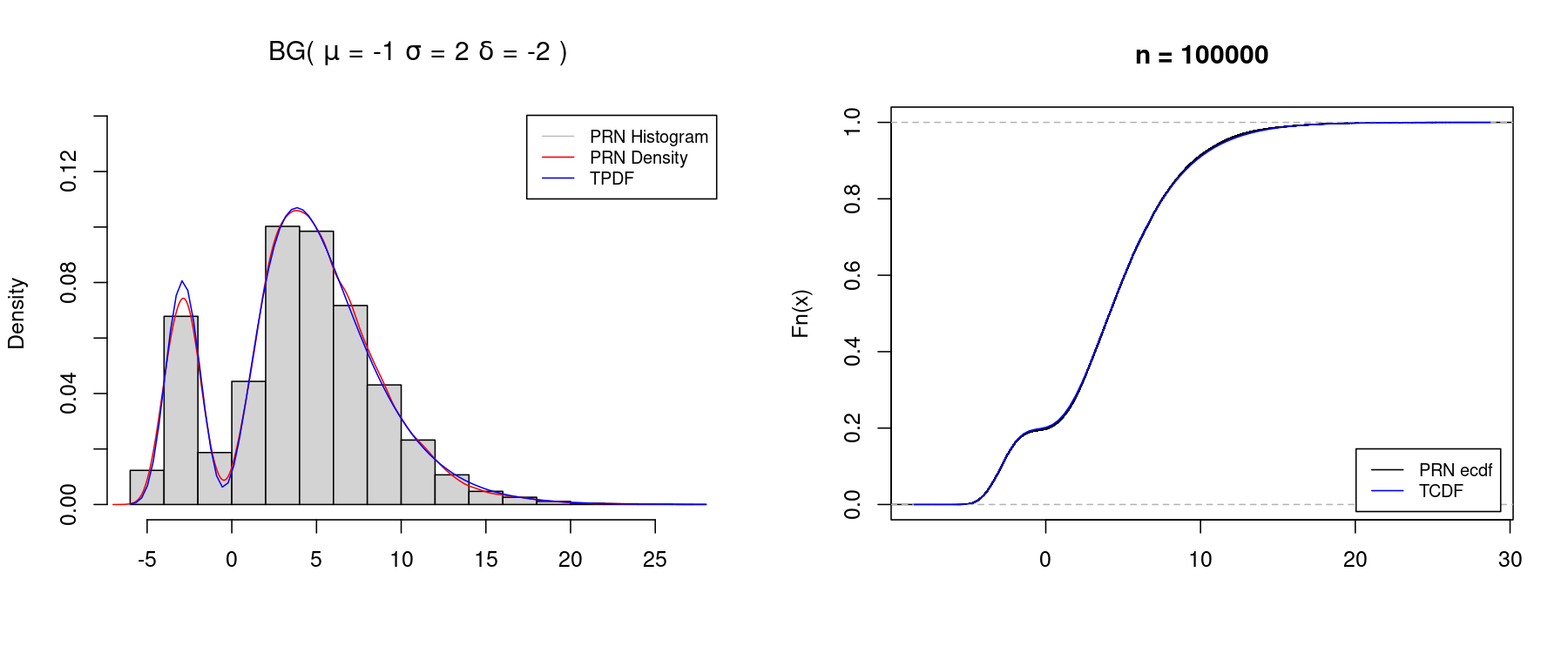}
	\vspace{-0.5cm}
	\caption{TPDF (left) and  TCDF (right),  n = 100000.}
\end{figure}

\begin{figure}[h!]
	\centering
	\includegraphics[width=0.8\linewidth]{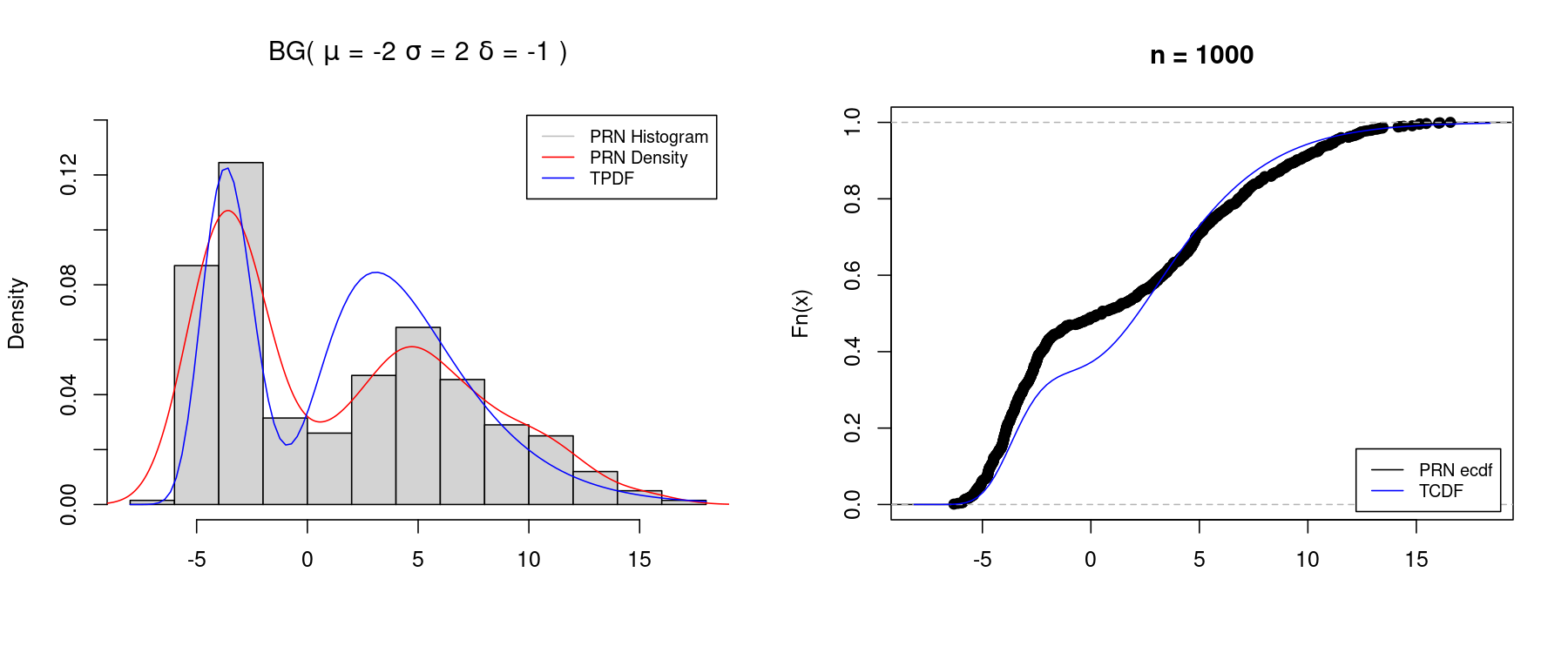}
	\vspace{-0.5cm}
	\caption{TPDF (left) and  TCDF (right),  n = 1000.}
\end{figure}

\begin{figure}[h!]
	\centering
	\includegraphics[width=0.8\linewidth]{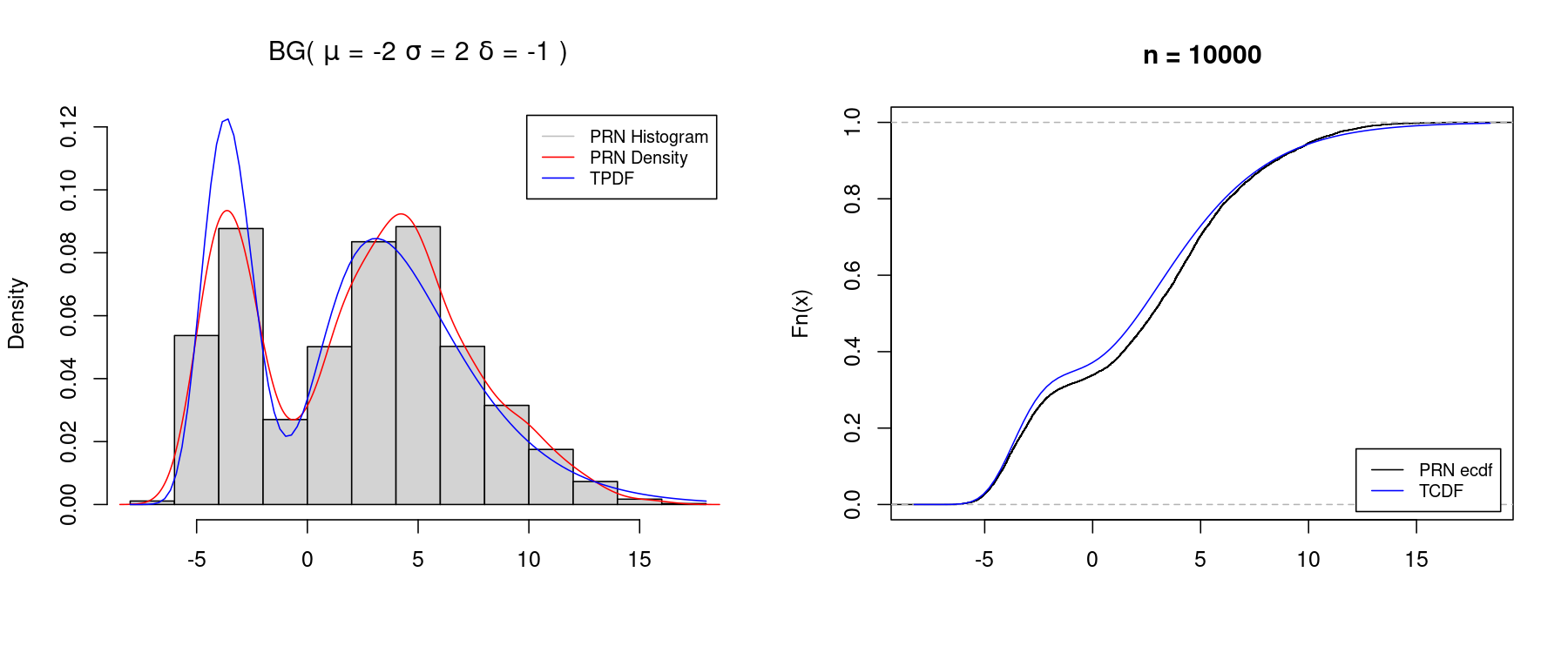}
	\vspace{-0.5cm}
	\caption{TPDF (left) and  TCDF (right),  n = 10000.}
\end{figure}

\begin{figure}[h!]
	\centering
	\includegraphics[width=0.8\linewidth]{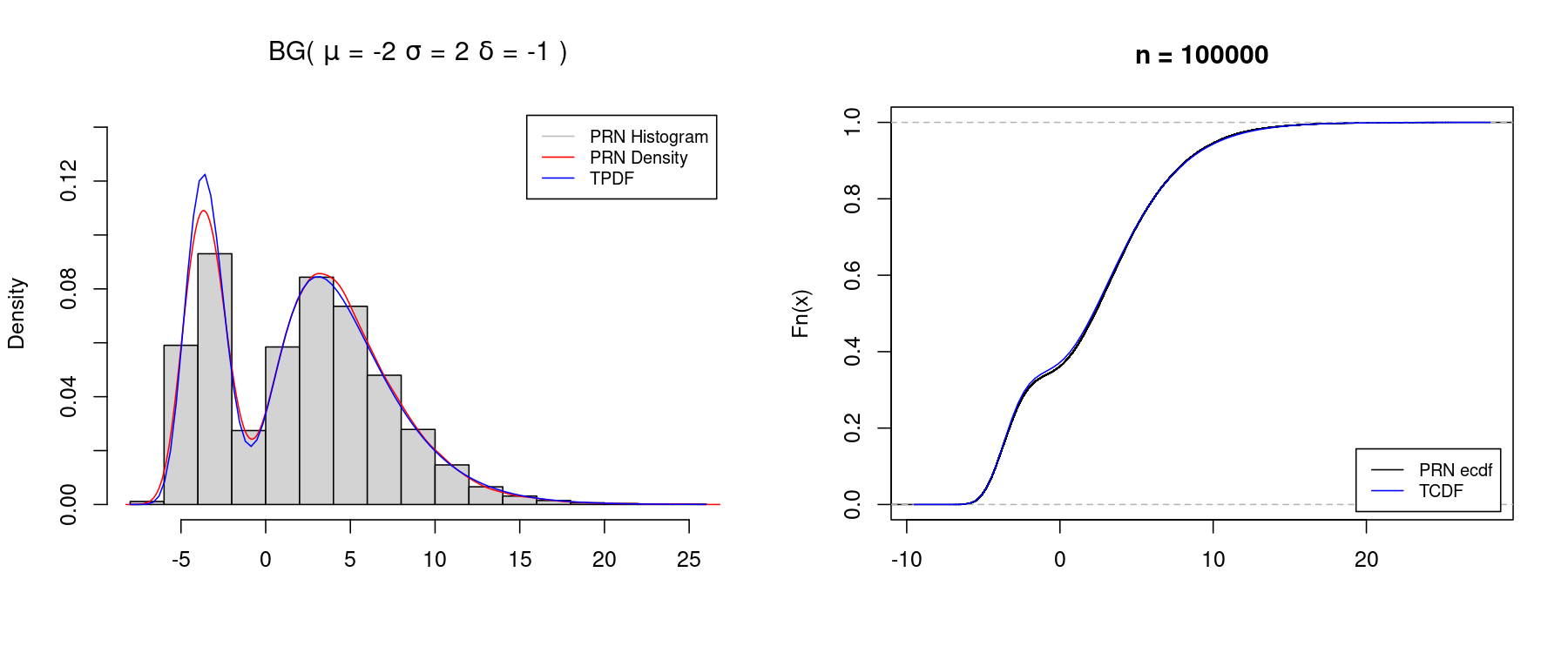}
	\vspace{-0.5cm}
	\caption{TPDF (left) and  TCDF (right),  n = 100000.}
	\label{figa9}
\end{figure}


\begin{thebibliography}{10}
	
	 \bibitem{Brom}
P. C. Brom, C. E. G. Otiniano, R. Vila, M. B. Pereira (2021). Bimodal Gumbel Distribution, Package bgumbel. https://CRAN.R-project.org/package=bgumbel.

	
	\bibitem{CBDA15}
	M. N. Çankaya, Y.  M. Bulut, F. Z. Doğru, O. Arslan (2015). A bimodal extension of the generalized gamma distribution.
	Revista Colombiana de Estadística, {38}:371--384.
	
	\bibitem{EL10}	
	D. Elal-Olivero (2010). Alpha-skew-normal distribution. Proyecciones Journal of	Mathematics, {29}:224--240.


		


		
    \bibitem{FT28} R. A. Fisher, L. H. C. Tippett (1928). Limiting forms of the frequency distribution of the largest or smallest member of a sample. Mathematical Proceedings of the Cambridge Philosophical Society, {24}:180–190.

	
	\bibitem{G58} E. J. Gumbel (1958). Statistics of Extremes. Columbia University Press, New York.

		
	\bibitem{hastings} 	
	W. K. Hastings (1970). Monte Carlo sampling methods using Markov chains and their applications. Biometrika, 57:97–109.
	
	
	\bibitem{Klugman1998}
	S. Klugman, H. Panjer, G. Willmot (1998).
	Loss models: From data to decisions, Wiley, New York.
	
		\bibitem{MCMCpack} 	
	A. D. Martin, K. M. Quinn, J. H. Park (2011). MCMCpack: Markov Chain Monte   Carlo in R. Journal of Statistical Software. 42(9): 1-21. URL https://www.jstatsoft.org/v42/i09/.
	
	
	\bibitem{MAJS13}	
	E. Z. Martinez, J. A. Achcar, A. A. Jacome, J. S. Santos (2013).
	Mixture and non-mixture cure fraction models based on the generalized modified Weibull
	distribution with an application to gastric cancer data.
	Computer Methods and Programs in Biomedicine, {112}:343--355.
	
	\bibitem{metropolis} 	
	N. Metropolis, A. W. Rosenbluth, M. N. Rosenbluth, A. H. (1953). Teller, and E. Teller. Equations of state calculations by fast computing machine.
	Journal of Chemical Physics, 21:1087–1091.

\bibitem{Okorie2016}
I. E. Okorie, A. C. Akpanta, J. Ohakwe (2016).
The Exponentiated Gumbel Type-2 Distribution: Properties and Application.  International Journal of Mathematics and Mathematical Sciences,  vol. 2016, 10 pages.	

\bibitem{Okorie2017}
I. E. Okorie, A. C. Akpanta, J. Ohakwe, D. C. Chikezie, E. O. Obi (2017).
The Kumaraswamy G Exponentiated Gumbel type-2 distribution. Afrika Statistika, vol. 12, No. 3.

	
	\bibitem{Pinheiro15}
	E. C. Pinheiro, S. L. P. Ferrari (2015). A comparative review of generalizations of the Gumbel extreme value distribution with an application to wind speed data. Journal of Statistical Computation and Simulation, {86}:2241--2261.


	\bibitem{R}
R-Team (2020). R: A Language and Environment for Statistical Computing. R Foundation for Statistical
Computing, Vienna, Austria.

\bibitem{LBox}
	A. Trapletti (2016). R: Box-Pierce and Ljung-Box Tests. stat.ethz.ch. Retrieved 2016-06-05. URL https://stat.ethz.ch/R-manual/R-devel/library/stats/html/box.test.html.
	
	
	\bibitem{VLSSS19}	
	R. Vila, J. Leao, H. Saulo, M. N. Shahzad, M. Santos-Neto (2020). On a bimodal Birnbaum-Saunders distribution with applications to lifetime data.
	Brazilian Journal of Probability and Statistics,
	{34}:495--518

	
	\bibitem{VFSPO20} 	
	R. Vila, L. Ferreira, H. Saulo, F. Prataviera, E. M. M. Ortega (2020). A bimodal gamma distribution: Properties, regression model and applications.
	Statistics: A Journal of Theoretical and Applied Statistics,
	{54}:469--493.
		
	\bibitem{VC20} 	
	R. Vila,  M. N. Çankaya (2021). A Bimodal Weibull Distribution: Properties and Inference.
	To appear in Journal of Applied Statistics.
	
	\bibitem{VHR21}
	R. Vila, H. Saulo, J. Roldan (2021).
	On some properties of the bimodal normal distribution and its bivariate version. Preprint, http://arxiv.org/abs/2106.00097.

	

\end{thebibliography}
\end{document}